\providecommand{\U}[1]{\protect\rule{.1in}{.1in}}
\providecommand{\U}[1]{\protect\rule{.1in}{.1in}}
\newtheorem{observation}{Observation}
\theoremstyle{plain}
\newtheorem{theorem}{Theorem}
\newtheorem{lemma}{Lemma}
\newtheorem{proposition}{Proposition}
\theoremstyle{definition}
\newtheorem{definition}{Definition}
\definecolor{ingo}{RGB}{139,0,0}
\definecolor{gerhard}{RGB}{0,0,139}
\definecolor{rick}{RGB}{0,139,0}
\renewcommand{\hbar}{h^\ast}
\newenvironment{figure2}{\begin{figure*}}{\end{figure*}}
\begin{document}

\title{Performance of Hierarchical Sparse Detectors for Massive MTC}
\author{Gerhard Wunder, Ingo Roth, Rick Fritschek and Jens Eisert\\{\footnotesize {Freie Universit\"at Berlin, Germany.} }\\[-0.3em] }
\maketitle

\begin{abstract}
Recently, a new class of so-called \emph{hierarchical thresholding algorithms}
was introduced to optimally exploit the sparsity structure in joint user
activity and channel detection problems. In this paper, we take a closer look
at the user detection performance of such algorithms under noise and relate
its performance to the classical block correlation detector with orthogonal
signatures. More specifically, we derive a lower bound for the diversity order
which, under suitable choice of the signatures, equals that of the block
correlation detector. Surprisingly, in specific parameter settings
non-orthogonal pilots, i.e. pilots where (cyclically) shifted versions
interfere with each other, outperform the block correlation detector.
Altogether, we show that, in wide parameter regimes, the hierarchical
thresholding detectors behave like the classical correlator with improved
detection performance but operate with much less required pilot subcarriers.
We provide mathematically rigorous and easy to handle formulas for numerical
evaluations and system design. Finally, we evaluate our findings with
numerical examples and show that, in a practical parameter setting, a
classical pilot channel can accommodate up to three advanced pilot channels
with the same performance.

\end{abstract}

%% ----------- metadata for hyperref -------------
\makeatletter\hypersetup{pdftitle = {HiHTP: A Custom-Tailored Hierarchical Sparse Detector for Massive MTC},
	     pdfauthor = {G. Wunder, I. Roth, R. Fritschek, J. Eisert},
	     pdfsubject = {Compressed sensing for 5G},
	     pdfkeywords = {Compressive sensing, model-based,
				     support recovery, hierarchically sparse,
				     block sparse, level sparse, hard thresholding pursuit,
				     iterative hard thresholding, sparse recovery, hierarchical hard thresholding pursuit, massive machine-type communication,
				     user identification, channel estimation
				     }
	    }\hypersetup{pageanchor=false}\makeatother

\section{Introduction}

Compressed Sensing (CS) is a mathematical field with applications in many
engineering disciplines involving \emph{big data processing}. One of the
recent intriguing fields of applications is 5G \& Beyond wireless
communication, particularly the so-called massive Machine-type Communications
(mMTC) scenario \cite{Wunder2015_ACCESS}. In mMTC, CS is used as an advanced
nonlinear multiuser detector (in uplink) which takes advantage of the sparse
user activity as well as the sparse channel profiles. Thereby, it can resolve
overload situations and identify the active user set 'en passant', in clear
contrast to the classical detectors.

Meanwhile, there is a large body of literature on such detection problems,
often termed \emph{compressive random access} \cite{Wunder2015_ACCESS}.
Initial references are \cite{Applebaum2012_PHYCOM,Zhu2011_TWC}, followed up by
major work of Bockelmann et al. \cite{Bockelm2013_ETT,Yi2014_GC} and recently
by Choi \cite{Choi2017_IoT,Choi2017_TWC}. A \emph{one (or single) shot}
approach has been proposed in
\cite{Wunder2014_ICC,Wunder2015_GC,Wunder2015_ASILOMAR,Wunder2017_ASILOMAR}
where both data and pilot channels are overloaded within the same OFDM symbol.
A comprehensive overview of competitive approaches within 5G (Rel. 16 upwards)
can be found in \cite{Bockelm2018_ACCESS,Bockelm2016_COMMAG,Schaich2016_ETT}.

While each of the approaches take specific properties into account, there is a
fairly general signal model. Most common is the block structure (users) and
within-block structure (channel taps) but surprisingly a custom-tailored
provably converging detector has not been known until very recently with the
invention of the \emph{Hierarchical Hard Thresholding Pursuit} (HiHTP)
algorithm in Ref.~\cite{RothEtAl:2016}. The key observation of HiHTP is that
sporadic user activity and sparse channel profiles give rise to a
hierarchically sparse structured vector consisting of all estimated channel
coefficients. Motivated by this observation, HiHTP can efficiently reconstruct
hierarchically sparse signals from only a small number of linear measurements.
In Ref.~\cite{RothEtAl:2016} recovery guarantees are derived for HiHTP
(Theorem \ref{thm:HiHTP}) and its performance is compared to other algorithms,
e.g.~the HiLasso-Algorithm \cite{SprechmannEtAl:2011}. Recently, we were also
informed that in \cite{Schepker2013_ISWCS} (and follow up work) a hierarchical
version of the \emph{Orthogonal Matching Pursuit} (OMP) algorithm was invented
in the same context but without providing a proof of convergence. Notably, an
even simpler variant of HiHTP is the related \emph{Hierarchical Iterative Hard
Thresholding} (HiIHT) algorithm. This was introduced and analyzed in our
recent paper \cite{Wunder2018_ARXIV-2} which studies an application featuring
multiple levels in the hierarchy arising from considering multiple antennas
and multiple measurements.

In this paper, we take a closer look at the user detection performance of such
hierarchical detectors under noise. We relate its performance to the classical
block correlation detector with orthogonal signatures \cite{Lee2013_TransWC}.
More specifically:

\begin{itemize}
\item We derive a lower bound for the diversity order of HiHTP/HiIHT
algorithms which, under suitable choice of the signatures, equals that of the
block correlation detector (Theorem \ref{thm:md1}). Naturally, a notion of
diversity order only makes sense if there are sufficiently many compressive
measurements so that the `outage probability' of HiHTP/HiIHT becomes
negligible in noise. Our first result is that non-orthogonal pilots, i.e.
pilots with mutually interfering shifted versions, outperform orthogonal
pilots in regimes characterized by a large user set/delay spread and a
sublinear growth of user activity/diversity, respectively (Theorem
\ref{thm:asymp}). To underline this fact: this holds true even when \emph{all}
received samples were available for the detection as it is for the block
correlation detector. This is a clear discrepancy to the non-sparse situation
where the block correlation detector is optimal.

\item Motivated by numerical evidence we show that user detection performance
of HiHTP/HiIHT is actually independent of user activity for a wide range of
parameter settings (but strongly depends on the channel profile of course). We
carry out an extended analysis heavily relying on concentration of measure
inequalities and prove that the number of sufficient compressive measurements
is at worst only slightly penalized for this to hold (Theorem \ref{thm:md2}).
Consequently, HiHTP/HiIHT essentially behaves like the classical correlator
with improved detection performance and with much less required pilot
subcarriers. The bottomline here is that user capacity is drastically
increased since the remaining subcarriers can implement further pilot
channels. In the simulation section, we validate this for practical settings.
\end{itemize}

The remainder of the paper is structured as follows: After introducing the
system model in Sec. \ref{sec:system} and the algorithms in Sec.
\ref{sec:algorithms}, we make our statements mathematically rigorous and
provide explicit and easy to handle formulas for numerical evaluations and
system design in Sec. \ref{sec:performance}. Finally, we verify our findings
with simulations and evaluations and conclude.

\textbf{Notations}. Let $\lVert x\rVert_{\ell_{q}}=(\sum_{i}|x_{i}|^{q}%
)^{1/q}$, $q>0$, be the $\ell_{q}$-norms and $\lVert x\rVert:=\lVert
x\rVert_{\ell_{2}}$. We use the short hand notation $[N]$ for the set
$\left\{  0,1,\ldots,N-1\right\}  $ and denote for any set $\mathcal{A}$ its
cardinality by $|\mathcal{A}|$. The elements of a vector/sequence $x$ are
referred to as $(x)_{i}$ or simply $x_{i}$ if it clear from the context. The
vector $x_{\mathcal{A}}\in\mathbb{C}^{|\mathcal{A}|}$ (matrix $X_{\mathcal{A}%
}$) is the projection of elements (rows) of the vector $x\in\mathbb{C}^{n}$
(matrix $X$) onto the subspace indexed by $\mathcal{A}\subset\lbrack n]$.
Depending on the context we also denote by $x_{\mathcal{A}}$ the vector that
coincides with $x$ for the elements indexed by $\mathcal{A}$ and is zero
otherwise. $I_{n}$ is the $n\times n$ identity matrix and $\operatorname{diag}%
(x)$ is the diagonal matrix with the vector $x\in\mathbb{C}^{n}$ on its
diagonal. For a matrix $A$, $A^{H/T}$ is its adjoint/transpose. The
multivariate complex Gaussian distribution of zero mean and covariance matrix
$\sigma^{2}I_{n}$ is denoted by $\mathcal{CN}\left(  0,\sigma^{2}I_{n}\right)
$. A vector $x\in\mathbb{C}^{N}$ is called $s$-sparse if it consists of at
most $s$ non-zero elements. The set of non-zero elements (support) of
$x\in\mathbb{C}^{N}$ is denoted as $\text{supp}(x)$. The imaginary unit is
$\imath=\sqrt{-1}$.

\section{System Model}

\label{sec:system}

Joint detection problems of mMTC, say in 5G uplink, can typically be casted as
follows: We allow for a fixed maximum set of $u$ users in a system with a
signal space of total dimension $n$, which can possibly be very large, e.g.
$2^{14}$ \cite{Wunder2015_GC}. The (time domain) signature $p_{i}\in
\mathbb{C}^{n}$ of the $i$-th user is taken from a possibly large set
$\mathcal{P}\subset\mathbb{C}^{n}$. Let $h_{i}\in\mathbb{C}^{s}$ denote the
sampled channel impulse response (CIR) of the $i$-th user, where $s\ll n$ is
the length of the cyclic prefix. While active users have a non-vanishing CIR,
inactive users are modeled by $h_{i}=0$. Furthermore, we define the matrix
$\text{circ}^{(s)}(p_{i})\in\mathbb{C}^{n\times s}$ to be the circulant matrix
with $p_{i}$ in its first column and $s-1$ shifted versions in the remaining
columns. Stacking the CIRs into a single column vector $h=[h_{1}^{T}%
\ h_{2}^{T}\ldots h_{u}^{T}]^{T}$, the signal received by the base station is
given by%
\[
y=D(p)h+e,
\]
where
$D(p)\coloneqq[\operatorname{circ}^{(s)}(p_{1}),\dots ,\operatorname{circ}^{(s)}(p_{u})]\in
\mathbb{C}^{n\times us}$ depends on the stacked signatures
$p\coloneqq[ p_{1}^{T}\ p_{2}^{T}\ldots p_{u}^{T}]^{T}$. In addition,
$e\in\mathbb{C}^{n}$ is assumed to be additive white Gaussian noise, i.e.
$e\sim\mathcal{CN}(0,\sigma^{2}I_{n})$.

A key idea in compressive random access is that the user identification and
channel estimation task needs to be accomplished within a much smaller
subspace, compared to the signal space, so that the remaining dimensions can
be exploited. The measurements in this subspace are of the form:
\begin{equation}
y_{\mathcal{B}}=\Phi_{\mathcal{B}}y=\Phi_{\mathcal{B}}D(p)h+\Phi_{\mathcal{B}%
}e, \label{eq:SysMod}%
\end{equation}
where we denote the restriction of some measurement matrix to a set of rows
with indices in $\mathcal{B}\subset\lbrack n]$ by $\Phi_{\mathcal{B}}$. In
practice, randomized (normalized) FFT measurements, $\Phi_{\mathcal{B}%
}=W_{\mathcal{B}},(W)_{ij}:=n^{-\frac{1}{2}}e^{-\imath2\pi ij/n}$ for
$k,l=0\dots n-1$, are typically implemented.

All performance indicators of the scheme strongly depend on the size of the
control window $\mathcal{B}$ and its complement $\mathcal{B}^{C}$ where
$\mathcal{B}\cup\mathcal{B}^{C}=[n]$. It is desired to keep the size of the
observation window $m\leq|\mathcal{B}|$ as small as possible to reduce the
control overhead $m/n$. The unused subcarriers in $\mathcal{B}^{C}$ can then
be used to implement further \emph{parallel} control channels for, say, user
activity detection. We call the ratio $n/m$\% the \emph{user capacity gain}.
In other words if for the same detection performance only $m<n$ subcarriers
are required then $n/m$\% more users can be detected.

The task of user identification amounts to the inverse problem of estimating
the non-vanishing blocks of $h$. The number of subcarriers in $\mathcal{B}$
required for solving this inverse problem depends on the structure of the
measurement map $\Phi_{\mathcal{B}}D(p)$ and the structure of $h$. In the
remainder of this section, we discuss properties of the measurement map in an
important special case and the sparsity structure of $h$.

\subsection{Proxy measurement model}

In general, the measurement map is difficult to analyze since $D(p)$ in eq.
(\ref{eq:SysMod}) depends on the specific design of the signatures $p_{i}$.
Assuming that $n\geq us$, we can define the signature set $\mathcal{P}$ in the
following way: We choose $p_{0}$ to be a sequence with unit power in frequency
domain such that
\begin{equation}
|\left(  \hat{p}_{0}\right)  _{i}|=\left\{
\begin{array}
[c]{cc}%
\sqrt{\frac{n}{m}} & i\in\mathcal{B}\\
0 & \text{else}%
\end{array}
\right.  , \label{eq:signa}%
\end{equation}
where $\hat{p}_{0}:=Wp_{0}$ denotes the FFT transform of $p_{0}$. Since $n\geq
us$, the matrix $D(p)$ can be completely composed of cyclical shifts of the
sequence $p_{0}$, i.e.:%
\[
p_{1}=p_{0},\quad p_{2}=p_{1}^{\left(  s\right)  },\quad p_{3}=p_{2}^{\left(
s\right)  },\quad\ldots\ ,
\]
where $p^{\left(  i\right)  }$ is the $i$ times cyclically shifted $p$. Hence,
$D(p)$ is a single circulant matrix. Clearly, $D(p)=W^{H}$diag$\left(
\sqrt{n}\hat{p}_{0}\right)  W$ and we can write%
\begin{align}
y_{\mathcal{B}}  &  =\Phi_{\mathcal{B}}D(p)h+\Phi_{\mathcal{B}}e\nonumber\\
&  =\Phi_{\mathcal{B}}W^{H}\text{diag}\left(  \sqrt{n}\hat{p}_{0}\right)
Wh+\Phi_{\mathcal{B}}e\nonumber\\
&  =\text{diag}\left(  \sqrt{n}\hat{p}_{\mathcal{B}}\right)  Wh+\Phi
_{\mathcal{B}}e\nonumber\\
&  \Longleftrightarrow y_{\mathcal{B}}^{\prime}=Ah+z, \label{eq:1stModel}%
\end{align}
where $A$ can be regarded as a randomized subsampled FFT\footnote{In fact, the
rows of $A$ are decorated with an additional factor given by the phases of
$\left(  \hat{p}_{0}\right)  _{i}$. However, these are not important for the
remaining analysis.}, which is normalized by an additional factor of
$\sqrt{n/m}$. Under the assumption that the additive noise $e$ is Gaussian
with variance $\sigma^{2}$, we find that $z\sim\mathcal{CN}\left(
0,\frac{\sigma^{2}}{n}I_{m}\right)  $.

\begin{observation}
We emphasize here that by such choice of signatures the sequences are no
longer (circular) shift-orthogonal. This situation is different from the LTE
standard, where Frank-Zadoff-Chu shift-orthogonal sequences are used. However,
we will see, that due to the structure of $h$ this choice does not induce
performance losses and is even better in certain parameter regimes compared to
the shift-orthogonal case.
\end{observation}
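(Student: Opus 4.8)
The substantive mathematical content of this observation is the claim that the sequence $p_0$ fixed by the spectral profile (\ref{eq:signa}) is no longer (circular) shift-orthogonal; the accompanying statement about detection performance is a forward reference, established only later by Theorem \ref{thm:asymp}. I would therefore prove the non-orthogonality directly and explicitly flag the performance claim as deferred. First I would make the notion precise: $p_0$ is (circular) shift-orthogonal iff $p_0^H p_0^{(i)} = 0$ for every shift $i \not\equiv 0 \pmod n$, equivalently iff its cyclic autocorrelation is supported only at the origin.

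The plan is to compute this autocorrelation in the Fourier domain. Since $W$ is unitary and the DFT shift theorem gives $\bigl(\widehat{p_0^{(i)}}\bigr)_\ell = e^{-\imath 2\pi \ell i/n}\,(\hat p_0)_\ell$, Plancherel's identity yields
\[
p_0^H p_0^{(i)} \;=\; \sum_{\ell=0}^{n-1} |(\hat p_0)_\ell|^2\, e^{-\imath 2\pi \ell i/n}.
\]
In words, the autocorrelation is, up to normalization, the Fourier transform of the power spectrum $|\hat p_0|^2$, which is just the discrete Wiener--Khinchin relation.

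Next I would insert the prescribed spectrum $|(\hat p_0)_\ell|^2 = \frac{n}{m}\,\mathbf{1}[\ell\in\mathcal{B}]$ from (\ref{eq:signa}) to obtain
\[
p_0^H p_0^{(i)} \;=\; \frac{n}{m}\sum_{\ell\in\mathcal{B}} e^{-\imath 2\pi \ell i/n},
\]
i.e.\ a scaled DFT of the indicator $\mathbf{1}_{\mathcal{B}}$ of the control window. The key step is the observation that this quantity vanishes for all $i\neq 0$ if and only if $\mathbf{1}_{\mathcal{B}}$ is constant on $[n]$, that is, if and only if $\mathcal{B}=[n]$ and hence $m=n$. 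Since the whole premise of the compressive scheme is that $m<n$ (a nonempty complement $\mathcal{B}^{C}$ is precisely what buys the user-capacity gain $n/m$), the sum cannot vanish identically in $i$, so there exist nonzero shifts with $p_0^H p_0^{(i)}\neq 0$ and shift-orthogonality fails. The same computation simultaneously clarifies the contrast with Frank-Zadoff-Chu (CAZAC) sequences: there $|\hat p_0|$ is flat across \emph{all} $n$ subcarriers, so the sum collapses to a Kronecker delta in $i$ and the shifts are perfectly orthogonal.

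I do not expect a genuine obstacle in this argument---it is essentially a one-line Fourier computation once the Wiener--Khinchin identity is in place. The only real subtlety is conceptual rather than technical: the more interesting assertion of the observation, namely that discarding shift-orthogonality induces no performance loss and is even advantageous in certain regimes, is not proved at this point at all but rests on the subsequent diversity-order analysis. I would state this dependence explicitly, citing Theorem \ref{thm:asymp}, rather than attempting any argument for it here.
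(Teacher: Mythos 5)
Your proof is correct, but it is worth noting that the paper itself offers no proof at all for this statement: it is stated as an informal \emph{Observation}, whose non-orthogonality claim is left implicit in the construction (a spectrum supported only on $\mathcal{B}\subsetneq[n]$ obviously cannot be flat, hence cannot be CAZAC-like) and whose performance claim is a forward reference substantiated only by the comparison of $B_0$ and $B_1$ culminating in Theorem \ref{thm:asymp}. What you add is a rigorous formalization of the first half: the Wiener--Khinchin identity $p_0^H p_0^{(i)}=\sum_{\ell}|(\hat p_0)_\ell|^2 e^{-\imath 2\pi \ell i/n}$ reduces shift-orthogonality to the DFT of the indicator $\mathbf{1}_{\mathcal{B}}$ vanishing at all nonzero shifts, which forces $\mathcal{B}=[n]$, i.e.\ $m=n$, contradicting the compressive premise $m<n$; the same identity explains why Frank-Zadoff-Chu sequences (flat spectrum over all $n$ subcarriers) are exactly shift-orthogonal. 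This is consistent with, and strictly sharper than, what the paper asserts. One caveat you may wish to record: your argument establishes non-orthogonality at \emph{some} nonzero cyclic shift, which matches the standard (CAZAC-style) definition you adopt; it does not by itself locate a nonzero cross-correlation among the specific shifts $0,\dots,us-1$ appearing as columns of $D(p)$ when $us<n$ (e.g.\ a periodically structured $\mathcal{B}$ can concentrate all off-peak correlation at shifts $\geq us$). For the paper's randomized $\mathcal{B}$ and for the second model with $us=n$ this distinction is immaterial, but it is the one point where your formalization and the operational meaning of ``interfering shifted versions'' could diverge. Your handling of the performance claim --- explicitly deferring it to Theorem \ref{thm:asymp} rather than arguing it here --- is exactly what the paper does.
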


We shall use a second model which turns out to be quite convenient in the
extended analysis. Here, we set $us=n$ possibly by appending zeros to vector
$h$. Clearly, we are loosing some denoising performance which is negligible
though. Eventually, since $A$ is a sub-sampled FFT matrix, it is clear that we
can write the system also in the following form:%
\begin{align}
y_{\mathcal{B}}  &  =Ah+z\nonumber\\
&  =A\left(  h+z^{\prime}\right) \nonumber\\
&  \Longleftrightarrow y_{\mathcal{B}}^{\prime\prime}=A\left(  h+z^{\prime
}\right)  , \label{eq:2edModel}%
\end{align}
where now $z^{\prime}\sim\mathcal{CN}\left(  0,\frac{\sigma^{2}m}{n^{2}}%
I_{n}\right)  $ by the properties of $A$, i.e. the noise per signal dimension
and its expectation is highly damped.

\subsection{Sparse priors}

The possibility to reconstruct $h$ from only a small control window
$\mathcal{B}$ relies on two structural assumptions \cite{Wunder2015_ACCESS}:
In mMTC we expect to have a large number of users having only sporadic
traffic. In effect, at a given time, only a small number of users $k_{u}\ll u$
is active. Therefore $h$ has only $k_{u}$ non-vanishing blocks. At the same
time the CIRs are observed to be sparse indicating that the blocks of $h$ have
at most $k_{s}$ non-vanishing entries. This leads to the following model for
$h$:

\begin{itemize}
\item The \emph{non-zero} complex-valued channel coefficients are independent
{normal distributed} with power $\mathbb{E}|(h_{i})_{j}|^{2}=\sigma_{h}^{2}$.

\item The support of $h_{i}$ is bounded with high probability such that we can
assume that $h_{i}\in\mathbb{C}^{s}$ with $s\ll n$ (typically, we have
$s\in\{300,\ldots,3000\}$).

\item The blocks $h_{i}$ are sparse, i.e. $|\operatorname{supp}(h_{i})|\leq
k_{s}$ (typically, we have $k_{s}\leq6$). Of course, practically this means
that only most of the energy is concentrated in $k_{s}$ paths, typically 95\%.
This is in accordance with current channel profiles, see e.g.
\cite{Ling2016_ARXIV} for a discussion of this assumption. There it is
reported that e.g. for a (rural) 6MHz bandwidth channel the assumptions indeed
hold for not to large delay spread, say below $6\mu s$. Furthermore, we point
out that in particular in our second model, in eq. (\ref{eq:2edModel}), the
remaining energy can be simply subsumed in the noise.

\item The support of the channels is uniformly distributed within $n$, {i.e.}
any subset has probability $\binom{s}{k_{s}}^{-1}$.

\item The user activity is sparse, i.e. $k_{u}$ users out of $u$ are actually
active (typically, we have $k_{u}\leq10$ out of $100$).

\item The set of active users is uniformly distributed within $n$, i.e.\ any
combination of users has probability $\binom{u}{k_{u}}^{-1}$ to be active.
\end{itemize}

The vector $h$ containing all CIRs has at most $k_{u}\cdot k_{s}$ non-vanishing
entries in total. But the hierarchical structure of the non-vanishing entries
of $h$ is even more restrictive. We give the following formal definition:

\begin{definition}
[Hierarchical sparsity] A compound vector $h \in\mathbb{C}^{u\cdot s}$
consisting of $u$ blocks of size $s$ is \emph{hierarchically $(k_{u},k_{s}%
)$-sparse} if at most $k_{u}$ blocks have non-vanishing entries and each of
these blocks is $k_{s}$-sparse.
\end{definition}

For convenience, we will call a hierarchically $(k_{u},k_{s})$-sparse vector
simply $(k_{u},k_{s})$\emph{-sparse}. Our signal model, thus, implies that $h$
is $(k_{u},k_{s})$-sparse. The support set of a $(k_{u},k_{s})$-sparse vector
is also called $(k_{u},k_{s})$-sparse. Let the actual set of active users and
active paths be $\mathcal{A}=\operatorname{supp}(h)$. We shall denote the set
of the active user indices by $\mathcal{A}^{B}$ and the non-vanishing path
locations of the $i$-th user by $\mathcal{A}_{i}$.

Note that we assume from now on, if not otherwise explicitly stated, that the
channel energy is equally distributed within the coefficients, which however
does not affect the generality of the results. In fact, all the results can be
formulated within the general framework, only the expressions become more
complicated. Hence, we shall set without loss of generality $\sigma_{h}^{2}=1$
so that the Signal-to-Noise Ratio ($\operatorname{SNR}$) becomes%
\[
\operatorname{SNR}:=\mathbb{E}|h_{0}|^{2}/\sigma^{2}=1/\sigma^{2}.
\]
Note, that $\operatorname{SNR}$ does not reflect the true receive
$\operatorname{SNR}$ in the system, which is $k_{s}/\sigma^{2}$.

In the next section, we discuss related algorithms HiHTP and HiIHT which take
advantage of the hierarchical sparsity.

\section{Detection algorithms}

\label{sec:algorithms}

\subsection{Block correlator}

Before we derive results for the new detection scheme, let us recapture the
approach of a simple block correlation detector. To this end, we define the
thresholding operator $L_{\xi}^{B}$. To a given compound vector $h\in
\mathbb{C}^{us}$, the operator $L_{\xi}^{B}$ assigns the subset $L_{\xi}%
^{B}(h)\subset\lbrack u]$ of indices of the blocks that exceed a threshold
$\sqrt{\xi}$ in $\ell_{2}$-norm, i.e.%
\begin{equation}
i\in L_{\xi}^{B}(h)\Longleftrightarrow\lVert h_{\mathcal{A}_{i}}\rVert
\geq\sqrt{\xi}. \label{eq:blocks}%
\end{equation}
Now, the $i$-th user is detected by the block correlation detector that
received a vector $y$ if
\begin{equation}
\sum_{j=0}^{s-1}|\langle y,p_{i}^{\left(  j\right)  }\rangle|^{2}\geq
\xi.\label{eq:corr}%
\end{equation}
Hence, the detection scheme is equivalent to defining the set of identified
active users as $L_{\xi}^{B}(D(p)^{H}y)$. In other words, the detector chooses
the users with energy collected over all shifts within the delay spread
exceeding a threshold. This method crucially relies on the orthogonality of
signatures excluding cross-talk from other signatures.

\subsection{HiHTP/HiIHT algorithm}

Motivated by the application in mMTC, the recovery of $(k_{u},k_{s})$-sparse
signals from linear measurements was studied in Ref.~\cite{RothEtAl:2016}
following the outline of model-based compressed
sensing~\cite{BaraniukEtAl:2010}. Therein an efficient algorithm, HiHTP, was
proposed and a recovery guarantee based on generalised restricted isometry
property (RIP) constants was proven. The main ingredient to the algorithm is
the $\ell_{2}$-norm projection onto hierarchical sparse vectors. For a vector
$x$ we denote the thresholding operator that gives the support of the best
$(k_{u},k_{s})$-sparse approximation to $x$ by
\begin{equation}
L_{k_{u},k_{s}}(x):=\operatorname{supp}\underset{\text{$(k_{u},k_{s})$-sparse
$y$}}{\operatorname*{arg}\min}\Vert{x-y}\Vert.\label{eq:threshold_1}%
\end{equation}
This operator can be efficiently calculated by selecting the $k_{s}$
absolutely largest entries in each block and subsequently the $k_{u}$ blocks
that are largest in $\ell_{2}$-norm. The strategy of the HiHTP algorithm is to
use the thresholding operator $L_{k_{u},k_{s}}$ to iteratively estimate the
support of $h$ and subsequently solve the inverse problem restricted to the
support estimate, see Algorithm~\ref{alg:HiHTP}.

\begin{algorithm}
[tb] \caption{HiHTP with user detection} \label{alg:HiHTP}

\begin{algorithmic}
[1]

\REQUIRE measurement matrix $A$, measurement vector $y_{\mathcal{B}}$,
sparsity $\left(  k_{u},k_{s}\right)  $, energy threshold $\xi$

\STATE$h^{\left(  t\right)  }=0$

\REPEAT

\STATE$\mathcal{A}^{\left(  t+1\right)  }=L_{k_{u},k_{s}}\left(  h^{\left(
t\right)  }+A^{H}\left(  y_{\mathcal{B}}-Ah^{\left(  t\right)  }\right)
\right)  $

\STATE$h^{\left(  t+1\right)  }=\arg\min{}_{z\in\mathbb{C}^{n}%
,\operatorname{supp}(z)\subseteq\mathcal{A}^{\left(  t+1\right)  }}\left\{
\Vert y_{\mathcal{B}}-Az\Vert\right\}  $

\STATE$t:=t+1$

\UNTIL stopping criterion is met at $t=t^{\ast}$

\STATE$\hbar=h^{\left(  t^{\ast}\right)  }$

\STATE$\mathcal{A}^{\ast}_{B}=L_{\xi}^{B}(\hbar)$

\ENSURE$(k_{u},k_{s})$-sparse vector $\hbar$ and active user set
$\mathcal{A}^{\ast}_{B}$.
\end{algorithmic}
\end{algorithm}

The HiHTP has a compagnion algorithm called HiIHT which is an even simpler
variant and given as Algorithm~\ref{alg:HiIHT}. The main difference is the
gradient step which involves a \emph{least squares minimization} step for
HiHTP but is omitted for HiIHT. We note that the performance analysis carries over verbatim to the HiIHT algorithm, see \cite{Wunder2018_ARXIV-2}.

\begin{algorithm}
[tb] \caption{HiIHT with user detection} \label{alg:HiIHT}

\begin{algorithmic}
[1]

\REQUIRE measurement matrix $A$, measurement vector $y_{\mathcal{B}}$,
sparsity $\left(  k_{u},k_{s}\right)  $, energy threshold $\xi$

\STATE$h^{\left(  t\right)  }=0$

\REPEAT

\STATE$\mathcal{A}^{\left(  t+1\right)  }=L_{k_{u},k_{s}}\left(  h^{\left(
t\right)  }+A^{H}\left(  y_{\mathcal{B}}-Ah^{\left(  t\right)  }\right)
\right)  $

\STATE$h^{\left(  t+1\right)  }=\left[ h^{\left(  t\right)  }+A^{H}\left(
y_{\mathcal{B}}-Ah^{\left(  t\right)  }\right) \right] _{\mathcal{A}^{\left(
t+1\right)  }} $

\STATE$t:=t+1$

\UNTIL stopping criterion is met at $t=t^{\ast}$

\STATE$\hbar=h^{\left(  t^{\ast}\right)  }$

\STATE$\mathcal{A}^{\ast}_{B}=L_{\xi}^{B}(\hbar)$

\ENSURE$(k_{u},k_{s})$-sparse vector $\hbar$ and active user set
$\mathcal{A}^{\ast}_{B}$.
\end{algorithmic}
\end{algorithm}

HiHTP/HiIHT algorithms come with a guarantee for stable and robust recovery
provided that the measurement matrix $A$ has the so-called hierarchical RIP
property custom tailored to the set of $(k_{u},k_{s})$-sparse vectors (for
details, please see \cite{RothEtAl:2016}). To date hierarchical RIP was not
shown for FFT measurements so that only the standard results apply. Needless
to say, there is also no RIP analysis for the more general situation where the
measurement matrix $\Phi_{\mathcal{B}}D(P)$ has a more complicated dependency
on the signature design. A hierarchical RIP bound for a measurement matrix
with {i.i.d.} Gaussian entries was derived in Ref.~\cite[Theorem
1]{RothEtAl:2016}. The result can be stated as the following theorem.

\begin{theorem}
\label{thm:HiHTP} Given an $(k_{u},k_{s})$-sparse vector $h\in\mathbb{C}^{us}$
and measurements of the form $y_{\mathcal{B}}=Ah+e$, where $A$ is a $m\times
us$ matrix with real {i.i.d.} Gaussian entries, the output $\hbar$ of
HiHTP/HiIHT, Algorithm~\ref{alg:HiHTP}, fullfils:
\[
\mathbb{P}(\Vert\hbar-h\Vert>\epsilon)\leq\mathbb{P}_{\overline{\text{RIP}}%
}+\mathbb{P}(\tau\Vert e\Vert>\epsilon).
\]
The probability that $A$ does not have the required hierarchical RIP property
$\mathbb{P}_{\overline{\text{RIP}}}$ is bounded by
\[
\mathbb{P}_{\overline{\text{RIP}}}\leq C\left(  \frac{es}{k_{s}}\right)
^{k_{s}}\left(  \frac{eu}{k_{u}}\right)  ^{k_{u}}e^{-cm}%
\]
with $C$ and $c$ independent numerical finite constants and $\tau
=\tau(m)<\infty$.
\end{theorem}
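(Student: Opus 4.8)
The plan is to separate the statement into a \emph{deterministic} recovery guarantee valid whenever $A$ satisfies the hierarchical RIP (HRIP) of suitable order and constant, and a \emph{probabilistic} estimate of the fraction of Gaussian matrices failing this property. The claimed probability splitting then follows by conditioning on the HRIP event $E$. Writing
\[
\mathbb{P}(\Vert\hbar-h\Vert>\epsilon)\le\mathbb{P}(E^{c})+\mathbb{P}\bigl(\{\Vert\hbar-h\Vert>\epsilon\}\cap E\bigr),
\]
the first term is by definition $\mathbb{P}_{\overline{\text{RIP}}}$, while on $E$ the deterministic bound $\Vert\hbar-h\Vert\le\tau\Vert e\Vert$ forces $\{\Vert\hbar-h\Vert>\epsilon\}\cap E\subseteq\{\tau\Vert e\Vert>\epsilon\}$, which bounds the second term by $\mathbb{P}(\tau\Vert e\Vert>\epsilon)$.

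For the deterministic guarantee I would adapt the convergence analysis of Hard Thresholding Pursuit (and, for HiIHT, of Iterative Hard Thresholding) to the hierarchical model, following the model-based template of~\cite{BaraniukEtAl:2010}. Two structural facts drive the argument: (i) the operator $L_{k_u,k_s}$ returns the \emph{exact} best $(k_u,k_s)$-sparse approximation --- select the $k_s$ largest entries per block, then the $k_u$ largest blocks in $\ell_2$-norm --- so the model projection is near-optimal in the usual sense; and (ii) the union of a bounded number of $(k_u,k_s)$-sparse supports is again hierarchically sparse with inflated parameters, e.g. $(3k_u,3k_s)$. Using (ii), HRIP at this enlarged order controls $\Vert(I-A^{H}A)_{S}\Vert$ on every support $S$ arising in the iteration, and a standard one-step estimate yields a contraction $\Vert h^{(t+1)}-h\Vert\le\rho\,\Vert h^{(t)}-h\Vert+\beta\Vert e\Vert$ with $\rho<1$ once the HRIP constant $\delta$ is below a fixed threshold. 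Summing the resulting geometric series and passing to $t=t^{\ast}$ gives $\Vert\hbar-h\Vert\le\tau\Vert e\Vert$ with $\tau=\tau(\delta)<\infty$; since the admissible $\delta$ is guaranteed only once $m$ is large enough, this reads $\tau=\tau(m)$.

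For the probabilistic HRIP estimate I would run the classical covering-plus-concentration argument, restricted to the hierarchical model. For a fixed $(k_u,k_s)$-sparse support $S$ the matrix $A|_{S}$ acts on a subspace of dimension at most $k_u k_s$; placing an $\eta$-net of cardinality $(3/\eta)^{k_u k_s}$ on its unit sphere and applying a Gaussian (Johnson--Lindenstrauss) tail bound to $\Vert Ax\Vert^{2}$ yields $(1-\delta)\Vert x\Vert^{2}\le\Vert Ax\Vert^{2}\le(1+\delta)\Vert x\Vert^{2}$ on $S$ off an event of probability at most $2e^{-c'm}$. A union bound over all $\binom{u}{k_u}\binom{s}{k_s}^{k_u}$ admissible supports, after bounding the binomials by $\binom{n}{k}\le(en/k)^{k}$, reproduces the combinatorial prefactor of the statement, while the net cardinality and the dependence on $\delta$ are absorbed into the constants $C$ and $c$; the requirement that $-cm$ dominate the entropy terms is exactly what fixes the needed scaling of $m$.

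The main obstacle is the deterministic contraction step, not the concentration estimate. One must track carefully how the supports produced by $L_{k_u,k_s}$ at successive iterations combine with the true support $\mathcal{A}$, verifying that every vector to which HRIP is applied stays within the enlarged hierarchical order and that the near-optimality constant of the model projection does not spoil $\rho<1$. The Gaussian step is routine; the only point requiring care there is that the relevant entropy is that of the hierarchical model (essentially $k_u k_s$ from the net together with the block and within-block counts), so that $m$ of order $k_u k_s$ up to logarithmic factors suffices --- this is precisely the gain over treating $h$ as a generic $us$-dimensional vector.
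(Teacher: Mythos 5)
The paper never proves this theorem itself --- it is imported verbatim from Ref.~\cite[Theorem 1]{RothEtAl:2016} ("The result can be stated as the following theorem") --- and your sketch reconstructs essentially the argument of that reference: conditioning on the hierarchical-RIP event to get the probability splitting, a model-based HTP/HiIHT contraction (exact model projection, supports combining into an enlarged $(3k_u,3k_s)$-sparse model, geometric decay giving $\Vert\hbar-h\Vert\le\tau\Vert e\Vert$), and a covering-net plus Gaussian concentration plus union bound over the $\binom{u}{k_u}\binom{s}{k_s}^{k_u}$ admissible supports. Note only that your union bound actually produces the prefactor $\left(\frac{eu}{k_u}\right)^{k_u}\left(\frac{es}{k_s}\right)^{k_u k_s}$ rather than the $\left(\frac{es}{k_s}\right)^{k_s}$ displayed in the theorem; this is consistent with the paper's own restatement $m\gtrsim k_u\log(u/k_u)+k_u k_s\log(s/k_s)$, so the displayed exponent appears to be a typo and your count is the correct one.
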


The parameter $\tau\left(  m\right)  $ is a noise enhancement which depends
crucially on the number of measurements. Typically, theoretical estimates for
$\tau\left(  m\right)  $ are too coarse compared to the actual performance in
numerical tests. Theorem~\ref{thm:HiHTP} can be equivalently stated as the
requirement
\[
m\gtrsim k_{u}\log({u}/{k}_{u})+k_{u}k_{s}\log({s}/k_{s})
\]
on the asymptotic scaling of the number of samples $m$ to guarantee recovery
of $h$ up to noise. Hence, the vector of all CIRs is correctly reconstructed
up to a noise induced error provided that the control window $\mathcal{B}$ has
a sufficient size of $m$. Note that HiHTP/HiIHT is concerned with the
reconstruction of all CIRs. Obviously, from the reconstructed CIRs the set of
active user can be determined in a final second step. We define the set of
active users identified by HiHTP/HiIHT as $\mathcal{A}^{\ast}_{B}:=L_{\xi}%
^{B}(\hbar)$, where $\hbar$ is the output and $L_{\xi}^{B}$ is defined in eq.
(\ref{eq:blocks}).

We now turn to the main part of the paper which contains a discussion of
relevant metrics and the respective performance analysis.

\section{Performance analysis}

\label{sec:performance}

\subsection{Figures of merit}

Theorem~\ref{thm:HiHTP} provides a full characterization of HiHTP/HiIHT for
the joint recovery of the channels of \emph{all} users. Intuitively, the
benefit of the algorithm becomes obvious if the performance per block is
considered. To this end, we denote the probability that a user is not
correctly detected as active or inactive by $P_{\text{be}}(\xi)$. Since all
blocks of $h$ are statistically equivalent and a user is active with
probability $k_{u}/u$, one concludes that for some user with index $i$
\[
\mathbb{P}\left(  \Vert\hbar_{i}-h_{i}\Vert>\epsilon\right)  \leq
P_{\text{be}}(\xi)+\frac{k_{u}}{u}\left(  \mathbb{P}_{\overline{\text{RIP}}%
}+\mathbb{P}(\tau\Vert e\Vert>\epsilon)\right)  ,
\]
where correspondingly $\hbar_{i}$ denotes the $i$-th block of $\hbar$. The
bound suggest that the performance is dominated by the second term for
realistic $\operatorname{SNR}$, which yields a $k_{u}/u$ gain over Theorem 1
in the task of channel estimation. However, it is dominated by the first term
$P_{\text{be}}(\xi)$ for large but realistic SNR, which describes the
interplay of the noise and the channel energy, assuming that the error floor
induced $\mathbb{P}_{\overline{\text{RIP}}}$ is negligible.

In order to bound $P_{\text{be}}(\xi)$, we define the probability that a
certain active user is missed by the user identification scheme by
$P_{\text{md}}(\xi)$. Note that by symmetry the probability of a missed
detection is identical for all active users and depends on the energy
threshold $\xi$. Correspondingly, the overall probability that any active user
is misdetected is bounded by $k_{u}P_{\text{md}}(\xi)$. But due to the
complicated interdependencies, this bound is not tight. The events are neither
mutually exclusive nor do they contain each other. The missed detection
probability per user $P_{\text{md}}\left(  \xi\right)  $ is a key metric for
the system \cite{Lee2013_TransWC}. Eventually, we define the probability that
(overall) some inactive user is falsely detected as active by $P_{\text{fa}%
}\left(  \xi\right)  $ so that finally
\[
P_{\text{be}}(\xi)\leq k_{u}P_{\text{md}}(\xi)+P_{\text{fa}}\left(  \xi\right)
.
\]
As in \cite{Lee2013_TransWC}, our analysis concentrates on $P_{\text{md}}$ in
the following since our tools can be easily applied to bound $P_{\text{fa}%
}\left(  \xi\right)  $ as well. To this end note that we have the upper bound
$P_{\text{fa}}\left(  \xi\right)  \leq\mathbb{P}\left(  \tau^{2}\Vert
e\Vert^{2}>\xi\right)  $. Using $e\sim\mathcal{CN}\left(  0,\sigma^{2}%
I_{m}\right)  $ and the concentration inequalities
\eqref{eq:Mconc1}\,--\,\eqref{eq:Mconc4} in the appendix the false alarm
probability of HiHTP/HiIHT can be bounded from above as%
\begin{align*}
P_{\text{fa}}\left(  \xi\right)   &  \leq e^{-\left(  \frac{n\xi}{\tau
^{2}m\sigma^{2}}-1\right)  ^{2}\frac{m}{2}}\\
&  =e^{-\left(  \frac{\operatorname{SNR}n\xi}{\tau^{2}m}-1\right)  ^{2}%
\frac{m}{2}}.
\end{align*}
The bound is not tight but still sufficent to adjust the threshold $\xi$. Once
we have correctly detected the active users, we can evaluate for each active
user $i$ the unnormalized frequency domain Mean Squared Error (MSE):
\begin{equation}
\text{MSE}_{i}:=\mathbb{E}\left\Vert \sqrt{n}W\left(  \hbar_{i}-h_{i}\right)
\right\Vert ^{2} .\label{eq:mse}%
\end{equation}
From the Theorem~\ref{thm:HiHTP} we have the upper bound $\text{MSE}_{i}%
\leq\tau^{2}m\sigma^{2}/n$. Moreover, eventually, we can invoke Theorem 2 in
\cite{Wunder2015_GC} to get an estimate of the achievable average subcarrier
rate $R_{i}$ (i.e. for those subcarriers in $\mathcal{B}^{C}$) based on the
MSE bound as%
\begin{align*}
R_{i}  &  \geq\mathbb{E}\left[  \log\left(  1+k_{s}\operatorname{SNR}\right)
\right]  -\log\left(  1+\text{MSE}_{i}\right) \\
&  \geq\mathbb{E}\left[  \log\left(  1+k_{s}\operatorname{SNR}\right)
\right]  -\log\left(  1+\tau^{2}m\sigma^{2}/n\right)  ,
\end{align*}
provided the user is detected (which happens with probability $P_{\text{md}}%
$). Since all terms are known except $P_{\text{md}}$, we shall now concentrate
on $P_{\text{md}}$.

\subsection{Baseline: The classical correlation detector}

In the end, we want to compare our final result with an exemplary result of
the recent literature \cite{Lee2013_TransWC}. In~\cite{Lee2013_TransWC} an
algorithm was presented which exploits the constant amplitude zero auto
correlation property of Zadoff-Chu sequences for signature identification. The
algorithm finds the maximal cross-correlation between the received signals and
the shifted sequences. Moreover, an exact analysis of the probability of
identification failure was derived for $\xi=0$ and the high SNR regime
\begin{align}
&  \log P_{\text{md}}\left(  \xi\right)  |_{\xi=0}\IEEEnonumber\\
&  \leq k_{s}\log\left(  \frac{1}{n\sigma^{2}}\right)  +\sum_{i=0}^{k_{s}}%
\log\left(  n\left(  n\sigma_{h}^{2}+\sigma^{2}\right)  \right)  +\log
B_{0}\IEEEnonumber\\
&  =-k_{s}\log\left(  1+n\operatorname{SNR}\right)  +\log B_{0}\left(
s,u,k_{s}\right)  ,\label{Zadoff-Chu}%
\end{align}
where $B_{0}$ is a constant that does only depend on the parameters $s$ and
$u$ and is given by \begin{IEEEeqnarray*}{rCl} \label{eq:b0}
B_{0}\left(  s,u,k_s\right)  &=&\frac{1}{\Gamma\left(  k_s\right)  }\sum_{i=1}
^{u-1}\left(  -1\right)  ^{i+1}\binom{u-1}{i}\nonumber\\
&\times&\>\sum_{j_{1}=1}^{s-1}
\cdots\sum_{j_{i}=1}^{s-1}\frac{\Gamma\left(  \sum_{k=1}^{i}j_{k}+k_s\right)
}{\prod_{k=1}^{i}\Gamma\left(  j_{k}+1\right)  }s^{-\sum_{k=1}^{i}j_{k}-k_s}.
\end{IEEEeqnarray*}We use this result as a baseline for comparison. We call
the pre-log factor the \emph{diversity order} of the detection scheme. Note
that the term $B_{0}(s,u,k_{s})$ is quite difficult to evaluate. We provide
simpler expressions in the following section.

\subsection{Missed detection rate of HiHTP}

Our bound for the missed detection probability of HiHTP/HiIHT is summarized in
the following theorem. Here, $F(\xi):=\mathbb{P}(\Vert h_{i}\Vert^{2}\leq\xi)$
is the cumulative distribution function of the norm of each of the blocks
(which is independent of $i$).

\begin{theorem}
\label{thm:md1} It holds that
\begin{align*}
P_{\text{md}}\left(  \xi\right)   &  \leq\mathbb{P}_{\overline{\text{RIP}}}\\
&  +F\left(  4\xi\right) \\
&  +(4\tau^{2})^{k_{s}}n^{-k_{s}}\operatorname{SNR}^{-k_{s}}B_{1}\left(
m,k_{s}\right)  ,
\end{align*}
where
\begin{equation}
B_{1}\left(  m,k_{s}\right)  :=\sum_{j=0}^{m-1}\frac{\Gamma\left(
k_{s}+j\right)  }{\Gamma\left(  k_{s}\right)  j!} .\label{eq:b1}%
\end{equation}

\end{theorem}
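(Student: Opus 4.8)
The plan is to condition on whether the measurement matrix $A$ has the required hierarchical RIP. The missed-detection event for an active user $i$ is $\{\Vert\hbar_i\Vert<\sqrt{\xi}\}$ with $h_i\neq 0$; whenever RIP fails I simply charge the whole event to $\mathbb{P}_{\overline{\text{RIP}}}$, which accounts for the first term. On the complementary event Theorem~\ref{thm:HiHTP} supplies the deterministic estimate $\Vert\hbar-h\Vert\leq\tau\Vert z\Vert$, where $z\sim\mathcal{CN}(0,\tfrac{\sigma^2}{n}I_m)$ is the proxy noise of~\eqref{eq:1stModel}. Restricting to block $i$ and applying the reverse triangle inequality gives $\Vert\hbar_i\Vert\geq\Vert h_i\Vert-\Vert\hbar_i-h_i\Vert\geq\Vert h_i\Vert-\tau\Vert z\Vert$, so the miss forces $\Vert h_i\Vert<\sqrt{\xi}+\tau\Vert z\Vert$. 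This is the crucial reduction: the per-block error is dominated by the global recovery error, so I never need to argue about correct support identification separately.

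Next I would square and split off the noise. Using $(\sqrt{a}+\sqrt{b})^2\leq 2(a+b)\leq 4\max(a,b)$ with $a=\xi$ and $b=\tau^2\Vert z\Vert^2$, the event $\{\Vert h_i\Vert<\sqrt{\xi}+\tau\Vert z\Vert\}$ is contained in $\{\Vert h_i\Vert^2<4\xi\}\cup\{\Vert h_i\Vert^2<4\tau^2\Vert z\Vert^2\}$. The first piece has probability exactly $F(4\xi)$, the second term of the claim, and the factor $4$ is precisely this $2(a+b)$ estimate. It then remains to bound $\mathbb{P}(\Vert h_i\Vert^2<4\tau^2\Vert z\Vert^2)$, which is manifestly $\xi$-independent and must reproduce the third term.

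The main technical step is this last probability. For an active user the block energy $S:=\Vert h_i\Vert^2$ is $\mathrm{Gamma}(k_s,1)$-distributed (a sum of $k_s$ i.i.d.\ unit-mean exponentials, since $\sigma_h^2=1$), while independently $N:=\Vert z\Vert^2$ is $\mathrm{Gamma}(m,\sigma^2/n)$-distributed. Conditioning on $N$ and invoking the elementary Gamma-CDF bound $F(x)=\tfrac{1}{\Gamma(k_s)}\int_0^x t^{k_s-1}e^{-t}\,dt\leq x^{k_s}/\Gamma(k_s+1)$ yields $\mathbb{P}(S<4\tau^2N)\leq\tfrac{(4\tau^2)^{k_s}}{\Gamma(k_s+1)}\,\mathbb{E}[N^{k_s}]$, and the noise moment is $\mathbb{E}[N^{k_s}]=(\sigma^2/n)^{k_s}\,\Gamma(m+k_s)/\Gamma(m)$. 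Recognising $(\sigma^2/n)^{k_s}=n^{-k_s}\operatorname{SNR}^{-k_s}$ and, via the hockey-stick identity $\sum_{j=0}^{m-1}\binom{k_s+j-1}{j}=\binom{m+k_s-1}{m-1}$, that $B_1(m,k_s)=\Gamma(m+k_s)/(\Gamma(k_s+1)\Gamma(m))$, collapses the bound to the stated $(4\tau^2)^{k_s}n^{-k_s}\operatorname{SNR}^{-k_s}B_1(m,k_s)$. I expect the only delicate points to be the clean closing of the truncated negative-binomial sum in~\eqref{eq:b1} into the Gamma ratio through hockey-stick, and the bookkeeping of the union bound over the $\xi$- and noise-dominated regimes; once $S$ and $N$ are identified as independent Gamma variables the probabilistic estimates are routine. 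A pleasant sanity check is that at $\xi=0$ the bound scales as $(n\operatorname{SNR})^{-k_s}$, exhibiting the same diversity order $k_s$ as the Zadoff--Chu baseline in~\eqref{Zadoff-Chu}.
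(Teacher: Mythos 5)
Your proposal is correct and reaches exactly the stated bound, and its skeleton coincides with the paper's: on the event that the hierarchical RIP holds you use Theorem~\ref{thm:HiHTP} as the deterministic estimate $\Vert\hbar-h\Vert\leq\tau\Vert z\Vert$, reduce the per-block miss to the global error via the reverse triangle inequality, and perform the same factor-$4$ split that produces $F(4\xi)$ plus a noise-dominated term. Two things differ. First, you dispense with the paper's two-event decomposition (threshold failure $\cup$ ``some inactive block beats block $i$''): since the detection rule is pure thresholding of the output blocks, the event $\{\Vert\hbar_i\Vert<\sqrt{\xi}\}$ alone captures the miss, and the global error bound silently absorbs the case of support misidentification ($\hbar_i=0$ gives $\Vert h_i\Vert=\Vert d_i\Vert\leq\Vert d\Vert$); this is a legitimate streamlining, though note the paper's ordering event is what keeps the statement meaningful under the $\xi=0$ convention used in the simulations. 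Second, and more substantively, you replace the paper's key computation. The paper invokes Proposition~\ref{prop:max}, which integrates the Erlang tail $\mathbb{P}(\Vert z\Vert^2>x)$ against the high-$\operatorname{SNR}$ approximation of the channel-energy density, so its conclusion carries an $o(\operatorname{SNR}^{-k_s})$ remainder and is intrinsically asymptotic. You instead condition on the noise, apply the exact Gamma-CDF bound $\mathbb{P}(\Vert h_i\Vert^2\leq x)\leq x^{k_s}/\Gamma(k_s+1)$, compute the exact moment $\mathbb{E}\Vert z\Vert^{2k_s}=(\sigma^2/n)^{k_s}\Gamma(m+k_s)/\Gamma(m)$, and close via the hockey-stick identity $B_1(m,k_s)=\binom{m+k_s-1}{k_s}$, which is consistent with the explicit formula the paper quotes right after the theorem. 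Your route buys a rigorous, non-asymptotic bound valid at every $\operatorname{SNR}$ with no error terms; what it gives up is that Proposition~\ref{prop:max} is two-sided (it also supplies the matching lower bound that the paper reuses in eq.~\eqref{eq:lb_c} and in the comparison underlying Theorem~\ref{thm:asymp}), whereas your Markov-type estimate is one-sided only.
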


The proof is deferred to the Appendix. Obviously, both, block correlation and
the HiHTP/HiIHT detector, achieve diversity order of $k_{s}$ but differ in the
\textquotedblleft shifts\textquotedblright\ $B_{0}$ and $B_{1}$. Note that a
numerical evaluation of the expression \eqref{eq:b1} for $B_{1}$ is much more
tractable than the expression \eqref{eq:b0} of $B_{0}$. Moreover, it can be
readily shown that an explicit formula is given by \cite{Wunder2017_ASILOMAR}%
\[
\sum_{j=0}^{m-1}\frac{\Gamma\left(  k_{s}+j\right)  }{\Gamma\left(
k_{s}\right)  j!}=\frac{m}{k_{s}}\binom{m+k_{s}-1}{k_{s}-1}.
\]
Notably, $F(\xi)$ can be numerically evaluted. But since $\xi$ is small, we
can as well use the approximations in the appendix, so that%
\[
F(\xi)\leq\int_{0}^{\xi}f_{X}\left(  x\right)  dx=\frac{\xi^{k_{s}}}%
{k_{s}\Gamma\left(  k_{s}\right)  }+o\left(  \xi^{k_{s}}\right) .
\]
Hence, consequently, we may roughly select $\xi\in O\left(  \operatorname{SNR}%
^{-1}\right)  $ to be negligible with respect to the diversity term.

\paragraph*{Comparison of the asymptotics}

We can now compare the asymptotics of $B_{0}\left(  s,u\right)  $ and
$B_{1}\left(  m,k_{s}\right)  $ where $m=m\left(  s,u,k_{s},k_{u}\right)  $.
To this end, we fix $k_{u}$ and $k_{s}$ and let either $u$ or $s$ or both
grow. For the classical correlator in (\ref{eq:corr}), a misdetection event
occurs if the collected noise energy is larger than the collected channel path
energy within the support of the cyclic shifts of some fixed active user
signature. Define $e_{j}^{\left(  c\right)  }\sim\mathcal{CN}\left(
0,\sigma^{2}I_{s}\right)  $, where $e^{\left(  c\right)  }$ is of size $s$,
and assume orthogonal signatures, i.e. $D\left(  h\right)  $ is a unitary
matrix. Since the $s$ noise terms are incoherently added whereas each channel
path scales with the signature's energy, we get from the results in \cite[eqn.
(18)]{Lee2013_TransWC} that the missed detection $P_{\text{md}}$ for $\xi=0$
is given by%
\begin{align*}
&  \mathbb{P}\left(  \left\{  \max\nolimits_{0\leq j<u-1,,j\neq i}\Vert
e_{j}^{\left(  c\right)  }\sqrt{n}\Vert^{2}>\Vert nh_{i}+e_{i}^{\left(
c\right)  }\sqrt{n}\Vert^{2}\right\}  \right)  \\
&  \geq\mathbb{P}\left(  \left\{  \max\nolimits_{0\leq j<u-1}\Vert
e_{j}^{\left(  c\right)  }\sqrt{n}\Vert^{2}>\left\Vert nh_{i}\right\Vert
^{2}\right\}  \right)  \\
&  \geq n^{-k_{s}}\operatorname{SNR}^{-k_{s}}B_{1}\left(  s,k_{s}\right)  ,
\end{align*}
where the third step holds for large enough SNR. On the other hand, we have
from the proof of Theorem \ref{thm:md1} the upper bound%
\begin{align*}
&  \mathbb{P}\left(  \left\{  \frac{4\tau^{2}}{n}\lVert e\rVert^{2}%
\geq\left\Vert h_{i}\right\Vert ^{2}\right\}  \right)  \\
&  \leq n^{-k_{s}}\operatorname{SNR}^{-k_{s}}(4\tau^{2})^{k_{s}}B_{1}\left(
m,k_{s}\right)  ,
\end{align*}
again for large SNR. Here, $m$ is of the order $k_{u}\log({u}/{k}_{u}%
)+k_{u}k_{s}\log({s}/k_{s})$ implying $\mathbb{P}_{\overline{\text{RIP}}%
}\rightarrow0$ and some finite noise enhancement $\tau<\infty$ for large $s$,
see Theorem\ref{thm:HiHTP}. In particular, $m$ grows only logarithmically and
not linear in $s$. Therefore, in the limit of large $s$ and fixed $u$, which
implies large $n$, the scaling of the bound is exponentially slower for
HiHTP/HiIHT compared to the classical block correlation detector. Thus, in
this regime of large system designs HiHTP/HiIHT is expected to outperform the
classical block correlation detector. We observe that the same finding is true
for any sub-linear scaling of $k_{s}$.

A similar comparison to the classical block correlation detector can be made
if $u$ grows while $s$ is constant. Since (\ref{eq:b0}) is too complicated to
be directly analyzed, we apply the union bound in Prop.~\ref{prop:max}, which
is given in appendix, and find that%
\begin{align}
&  \mathbb{P}\left(  \left\{  \max\nolimits_{0<j<u-1}\|e_{j}^{\left(
c\right)  }\sqrt{n}\|>\|nh_{i}+e_{i}^{\left(  c\right)  }\sqrt{n}\|\right\}
\right) \label{eq:lb_c}\\
&  \geq\mathbb{P}\left(  \left\{  \max\nolimits_{0<j<u-1}\|e_{j}^{\left(
c\right)  }\sqrt{n}\|>\|nh_{i}\|\right\}  \right) \nonumber\\
&  \geq n^{-k_{s}}\operatorname{SNR}^{-k_{s}}\left(  u-1\right)  B_{1}\left(
s,k_{s}\right)  +o\left(  \operatorname{SNR}^{-k_{s}}\right) .\nonumber
\end{align}
Hence, again, since $m$ and $B_{1}\left(  m,k_{s}\right)  $ grows only
logarithmically and not linear in $u$, and since $B_{1}\left(  m,k_{s}\right)
\simeq\frac{m}{k_{s}}\left(  \frac{e\left(  m+k_{s}-1\right)  }{k_{s}%
-1}\right)  ^{k_{s}-1}$, in the limit of large $u$ and sub-linear scaling of
$k_{u}$, the scaling of the bound is exponentially slower for HiHTP/HiIHT.
Altogether, we conclude:

\begin{theorem}
\label{thm:asymp} Under the signatures' choice of \eqref{eq:signa}, the
hierarchical thresholding detector outperforms the classical block correlator
with respect to $P_{\text{md}}$ in the regime of large $n=us$ and only
sub-linear scaling (in $n$) of $k_{u}k_{s}$.
\end{theorem}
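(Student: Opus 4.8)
The plan is to assemble Theorem~\ref{thm:asymp} directly from the two asymptotic comparisons already carried out in the text preceding the statement, treating it as the natural conclusion of those estimates rather than as an independent argument. The core of the proof is to compare the lower bound on $P_{\text{md}}$ for the classical block correlator against the upper bound on $P_{\text{md}}$ for HiHTP/HiIHT supplied by Theorem~\ref{thm:md1}, and to show that in the stated regime the latter is exponentially smaller. Both bounds share the diversity prefactor $n^{-k_s}\operatorname{SNR}^{-k_s}$, so the entire comparison reduces to the behaviour of the ``shift'' terms $B_1(s,k_s)$ (resp.\ $(u-1)B_1(s,k_s)$) for the correlator versus $(4\tau^2)^{k_s}B_1(m,k_s)$ for the hierarchical detector.

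First I would fix $k_u$ and $k_s$ and invoke the key structural fact established in Theorem~\ref{thm:HiHTP}: the sufficient number of measurements scales as $m \in O\!\left(k_u\log(u/k_u) + k_u k_s\log(s/k_s)\right)$, i.e.\ only \emph{logarithmically} in both $s$ and $u$, while guaranteeing $\mathbb{P}_{\overline{\text{RIP}}} \to 0$ and a finite noise enhancement $\tau < \infty$. The correlator's shift term, by contrast, grows \emph{polynomially} in $s$ (through $B_1(s,k_s)$) and carries an explicit extra factor of $(u-1)$ in the large-$u$ regime. Using the closed form $B_1(m,k_s) \simeq \frac{m}{k_s}\bigl(\frac{e(m+k_s-1)}{k_s-1}\bigr)^{k_s-1}$ already quoted in the text, the hierarchical shift term is polylogarithmic in $n=us$, whereas the correlator's is polynomial in $n$. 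Dividing the two, the ratio decays like a polynomial-over-polylog expression, which tends to zero; hence for large $n$ the HiHTP/HiIHT bound is dominated by the correlator's, establishing the claimed superiority with respect to $P_{\text{md}}$.

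I would then combine the two separate regimes — large $s$ with fixed $u$, and large $u$ with fixed $s$ — into the single unified statement ``large $n=us$ with sub-linear scaling of $k_u k_s$,'' observing that the sub-linearity condition is exactly what keeps $m$ growing strictly slower than $n$ (so that $B_1(m,k_s)$ remains polylogarithmic in $n$) while simultaneously keeping $\mathbb{P}_{\overline{\text{RIP}}}$ negligible. The residual terms in the Theorem~\ref{thm:md1} bound, namely $\mathbb{P}_{\overline{\text{RIP}}}$ and $F(4\xi)$, must be shown not to spoil the comparison: $\mathbb{P}_{\overline{\text{RIP}}} \to 0$ by the measurement scaling, and $F(4\xi)$ is controlled by choosing $\xi \in O(\operatorname{SNR}^{-1})$ as noted after Theorem~\ref{thm:md1}, so that $F(4\xi) = O(\operatorname{SNR}^{-k_s})$ is absorbed into the diversity term and does not degrade the pre-log order.

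The main obstacle I anticipate is making the phrase ``sub-linear scaling of $k_u k_s$'' precise enough to guarantee that $m(s,u,k_s,k_u)$ grows strictly slower than any positive power of $n$, since $m$ depends on $k_u k_s\log(s/k_s)$ and an uncontrolled joint growth of $k_u$ and $k_s$ could in principle push $m$ up to polynomial order and destroy the exponential separation. I would handle this by requiring $k_u k_s = o(n/\log n)$ (or a comparably explicit condition), which ensures $m = o(n)$ and keeps $B_1(m,k_s)$ polylogarithmic relative to the correlator's polynomial growth; verifying that the constants $C$, $c$, $\tau$ stay uniformly bounded throughout this regime is the delicate bookkeeping step, but it follows from the finiteness assertions in Theorem~\ref{thm:HiHTP} and does not require fresh estimates.
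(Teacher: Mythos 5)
Your proposal is correct and follows essentially the same route as the paper: the paper likewise proves Theorem~\ref{thm:asymp} by combining the two asymptotic comparisons preceding the statement, pitting the correlator's lower bound $n^{-k_s}\operatorname{SNR}^{-k_s}B_1(s,k_s)$ (with the extra $(u-1)$ factor in the large-$u$ regime) against the HiHTP/HiIHT upper bound $n^{-k_s}\operatorname{SNR}^{-k_s}(4\tau^2)^{k_s}B_1(m,k_s)$ from the proof of Theorem~\ref{thm:md1}, and exploiting that $m$, hence $B_1(m,k_s)$, grows only logarithmically in $s$ and $u$ while the correlator's shift term grows polynomially in $n$. Your added care about the residual terms $\mathbb{P}_{\overline{\text{RIP}}}$, $F(4\xi)$ and about making the sub-linearity condition explicit is a sound tightening of bookkeeping the paper leaves informal, but it does not change the argument.
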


While these asymptotics justify the application of HiHTP/HiIHT in the sparse
setting there is some very unsatisfying properties of our upper bound.
Clearly, the upper bound depends on the noise enhancement parameter of
HiHTP/HiIHT $\tau$. In the proofs the noise enhancement typically is
conservatively estimated. Hence, a validation of bounds can only be done on
qualitative level. Another main problem is that, as we will see in the
simulations, it does not reflect the fact that user detection is actually
independent of $k_{u}$ as long as there is a sufficient number of
measurements. {This leads us to an alternative approach in the next section.}

\subsection{Improved user detection analysis}

Another approach {to bound the missed detection probability focuses on the
first linear estimation step of the support in the HiHTP/HiIHT
algorithm\footnote{It is easy to see that the {derivations hold for any linear
estimator $\Psi=A^{H}B$ where }$B$ is any positive semidefinite matrix (i.e.
it has a square root).}. Let $\Psi=A^{H}$ be the linear estimator used by
HiHTP/HiIHT and consider the noiseless case. Furthermore, we assume that the
signal strength of each block is bounded by $h_{\text{min}}\ \leq\Vert
h_{i}\Vert^{2}\leq h_{\text{max}}$. }Let us introduce $v_{1},\ldots,v_{k_{s}}$
as $k_{s}$ vectors of the sparsifying basis in $\mathbb{C}^{us}$. With the
help of this basis we can write
\[
\left\Vert h_{i}\right\Vert ^{2}=\sum_{j\in\mathcal{A}_{i}}\left\vert
\left\langle h,v_{j}\right\rangle \right\vert ^{2}.
\]
By assumption, if the user is active, i.e. $i\in\mathcal{A}^{B}$ we have
\[
\sum_{j\in\mathcal{A}_{i}}\left\vert \left\langle h,v_{j}\right\rangle
\right\vert ^{2}\geq h_{\text{min}},
\]
while if the user is inactive, i.e.\ $i\in\overline{\mathcal{A}^{B}}$, it is
\[
\sum_{j\in\mathcal{A}_{i}^{\ast}}\left\vert \left\langle h,v_{j}\right\rangle
\right\vert ^{2}=0.
\]
Suppose the energy threshold $\xi$ is chosen as $0<\xi<h_{\min}$ and define
$\epsilon\coloneqq\min\left\{  \xi,h_{\min}-\xi\right\}  $. We denote the set
of all possible index sets of cardinality $k_{s}$ and indices only in the
$i$-th block by $\Omega_{i}$. The thresholding operator $L_{\xi}^{B}$ applied
to the linear estimation $\Psi y$ does identify the correct set of users if
the following condition holds:
\begin{equation}
\max_{i\in\left[  u\right]  ,\omega\in\Omega_{i}}\left\vert \sum_{j\in\omega
}\left\vert \left\langle h,v_{j}\right\rangle \right\vert ^{2}-\sum
_{j\in\omega}\left\vert \left\langle \Psi y,v_{j}\right\rangle \right\vert
^{2}\right\vert \leq\epsilon. \label{eq:e_set}%
\end{equation}
We denote the probability that this condition~\eqref{eq:e_set} does not hold
for a given $h$ by $\mathbb{P}_{\overline{\text{sRIP}}}(\epsilon\mid h)$.

In fact, the condition~\eqref{eq:e_set} implies that for some estimated set
$\mathcal{A}^{\ast}$ that the linear estimator identifies%
\begin{align*}
&  \sum_{j\in\mathcal{A}^{\ast}_{i}}\left\vert \left\langle \Psi
y,v_{j}\right\rangle \right\vert ^{2}\\
&  \geq\sum_{j\in\mathcal{A}_{i}}\left\vert \left\langle \Psi y,v_{j}%
\right\rangle \right\vert ^{2}\\
&  \geq\sum_{j\in\mathcal{A}_{i}}\left\vert \left\langle h,v_{j}\right\rangle
\right\vert ^{2}-\epsilon\\
&  >\xi,\;\text{if }i\in\mathcal{A}^{B}%
\end{align*}
and
\[
\sum_{j\in{\mathcal{A}^{\ast}}_{i}}\left\vert \left\langle \Psi y,v_{j}%
\right\rangle \right\vert ^{2}\leq\epsilon\leq\xi,\;\text{if }i\in
\overline{\mathcal{A} ^{B}}%
\]
since $\sum_{j\in\Omega_{i}}\left\vert \left\langle h,v_{j}\right\rangle
\right\vert ^{2}=0,$ if $i\in\overline{\mathcal{A} ^{B}}$, so that the block
is correctly detected (without noise).

In the following lemma we will show that the condition~\eqref{eq:e_set} holds
with high probability for sufficiently large $m$ on average over all $h$. We
will use the model (\ref{eq:2edModel}) for the signals $h$ and incorporate the
noise according to this model.

\begin{lemma}
\label{lem:FFTconc} Let $\epsilon>0$. Then, the event (\ref{eq:e_set}) holds
with probability%
\begin{align*}
&  \mathbb{P}_{\overline{\text{sRIP}}}\left(  \epsilon\right)  \\
&  \leq32u\left(  \frac{es}{k_{s}}\right)  ^{k_{s}}k_{s}e^{-\frac{\epsilon
^{2}m}{O\left(  4k_{u}^{4}k_{s}^{5}\right)  }}\\
&  +\frac{32k_{s}}{\sqrt{2}}\frac{\operatorname{SNR}^{-k_{u}k_{s}}}{n^{2k_{s}%
}}\left(  \frac{k_{u}^{4}k_{s}^{5}}{n^{(1-3/k_{u})}}\right)  ^{k_{u}k_{s}}\\
&  +4e^{-\frac{k_{u}k_{s}}{2}}%
\end{align*}
for sufficiently large SNR, $k_{u}\geq8,k_{s}\geq3$.
\end{lemma}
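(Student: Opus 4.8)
The plan is to establish condition \eqref{eq:e_set} by controlling, uniformly over all users $i$ and all test sets $\omega\in\Omega_i$, the deviation of the estimated block energies $\sum_{j\in\omega}|\langle\Psi y,v_j\rangle|^2$ from the true ones $\sum_{j\in\omega}|\langle h,v_j\rangle|^2$. Using model \eqref{eq:2edModel} I would write $\Psi y=A^HA(h+z')=:A^HAg$ with $g:=h+z'$, so that in the (delay-domain) sparsifying basis $\langle h,v_j\rangle=h_j$ and $\langle\Psi y,v_j\rangle=h_j+\big((A^HA-I)h\big)_j+(A^HAz')_j$. Setting $\delta_j:=\big((A^HA-I)h\big)_j+(A^HAz')_j$, the per-set deviation expands as $\sum_{j\in\omega}\big(2\,\mathrm{Re}(\bar h_j\delta_j)+|\delta_j|^2\big)$, which I would split into a signal-driven part (the off-diagonal action of $A^HA-I$ on $h$) and a noise-driven part (the action of $A^HA$ on $z'$). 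The controlling fact is that for the randomized subsampled FFT one has $(A^HA)_{jj}=1$ and $(A^HA)_{jk}=\tfrac1m\sum_{b\in\mathcal B}e^{-\imath2\pi b(k-j)/n}$ for $j\neq k$, so that $A^HA$ is a mean-zero-off-diagonal, unbiased estimator of the identity over the random choice of $\mathcal B$; this is what makes each $\delta_j$ small and is exactly what the ``$\mathrm{sRIP}$'' event encodes, in analogy with $\mathbb P_{\overline{\mathrm{RIP}}}$ in Theorem~\ref{thm:HiHTP}.

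I would then bound $\mathbb{P}_{\overline{\text{sRIP}}}(\epsilon)$ by a union over three bad events, matching the three summands of the claim. First, the signal is atypical, i.e.\ $\|h\|^2$ is much larger than its mean $k_uk_s$: since $h$ is complex Gaussian on the active support, $\|h\|^2$ is a scaled chi-square with $k_uk_s$ complex degrees of freedom, and the Laurent--Massart-type inequalities \eqref{eq:Mconc1}\,--\,\eqref{eq:Mconc4} of the appendix give a deviation bound of the form $4e^{-k_uk_s/2}$, the third summand. Second, conditioned on $\|h\|^2\lesssim k_uk_s$, the subsampled-FFT quadratic form concentrates: for fixed $i$ and $\omega$, $A^HAg$ is an average of $m$ i.i.d.\ rank-one contributions, one per sampled frequency in $\mathcal B$, so a Bernstein/matrix-concentration bound yields a tail $2e^{-\epsilon^2m/V}$ with variance proxy $V=O(k_u^4k_s^5)$ inherited from the conditioned signal energy and the incoherence of the DFT rows; a union bound over the $u$ users, the $\binom{s}{k_s}\le(es/k_s)^{k_s}$ supports and the $k_s$ coordinates, together with the concentration constant, reproduces the first summand. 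Third, the noise part is atypical: because $z'\sim\mathcal{CN}(0,\tfrac{\sigma^2m}{n^2}I_n)$, the contribution $\sum_{j\in\omega}|(A^HAz')_j|^2$ is a Gaussian quadratic form that I would control by a high-order ($k_uk_s$-th) moment bound followed by Markov; the per-dimension noise scale $\sigma^2=\mathrm{SNR}^{-1}$, the $m/n^2$ covariance and the moment order then combine into the middle summand, whose exponents $\mathrm{SNR}^{-k_uk_s}$ and the stated power of $n$ are precisely the output of that moment computation.

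The main obstacle, and the step demanding the most care, is the second one: securing the correct variance proxy $V=O(4k_u^4k_s^5)$ for the subsampled-FFT quadratic form. One must pass from the diagonal-dominant structure of $A^HA$ to a \emph{uniform} control of the off-diagonal interference $\big((A^HA-I)h\big)_j$ across all $k_uk_s$ active coordinates -- this is where the powers of $k_u$ and $k_s$ accumulate through $\ell_1/\ell_2/\ell_\infty$ norm relations on the conditioned signal -- while simultaneously keeping the cross term $2\,\mathrm{Re}(\bar h_j\delta_j)$ and the quadratic term $|\delta_j|^2$ inside a single sub-exponential regime so that one Bernstein bound survives the union over $\omega$. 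The secondary delicate point is the bookkeeping of the moment argument for the noise term: choosing the moment order equal to $k_uk_s$ is what aligns its $\mathrm{SNR}$- and $n$-exponents with the other two summands, and it is this alignment, together with the regime $k_u\ge8,\ k_s\ge3$ and sufficiently large SNR, that makes the three contributions simultaneously summable to the stated bound.
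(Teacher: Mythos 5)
Your proposal is sound in outline but takes a genuinely different route from the paper's proof. The paper never touches the off-diagonal entries of $A^{H}A$ directly: it factors each energy deviation as $|a^{2}-b^{2}|=|a+b|\,|a-b|$, bounds the sum factor by $2(1+2\epsilon)\sqrt{k_{s}}\Vert h^{z}\Vert$ via Cauchy--Schwarz, and converts the difference factor through the polarization identity $\langle h^{z},v_{j}\rangle=\tfrac14\sum_{k=0}^{3}\imath^{k}\Vert h^{z}+\imath^{k}v_{j}\Vert^{2}$ into norm deviations of the $(k_{u}k_{s}+1)$-sparse vectors $h^{z}+\imath^{k}v_{j}$; this reduction lets it invoke the off-the-shelf subsampled-FFT concentration \eqref{eq:Mconc4} (packaged as Proposition~\ref{prop:FFTconc}) rather than hand-crafting a Bernstein bound for the coherence terms, which is what your plan requires -- your route essentially re-derives a version of \eqref{eq:Mconc4}, feasible but exactly the step you flag as delicate. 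The more substantive divergence is the middle, $\operatorname{SNR}$-dependent summand: in the paper it is \emph{not} a noise-moment term but a channel-outage term. Because the paper's concentration is relative (proportional to $\Vert h^{z}\Vert^{2}$), it degrades when $\Vert h\Vert$ is small, and the conditional failure probabilities, decaying like $e^{-c\Vert h\Vert^{2}\operatorname{SNR}n}$, are integrated against the channel density $f(x)\propto x^{k_{u}k_{s}-1}$ near the origin (eqs.~\eqref{eq:Pdf} and \eqref{eq:integral}, plus Stirling); this Gamma integral is what produces $\operatorname{SNR}^{-k_{u}k_{s}}$, the stated powers of $n$, and the restrictions $k_{u}\geq8$, $k_{s}\geq3$ -- the same mechanism as Proposition~\ref{prop:max}. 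Your additive decomposition is benign at small $\Vert h\Vert$, so this phenomenon never arises for you; instead you manufacture the $\operatorname{SNR}^{-k_{u}k_{s}}$ factor by deliberately weakening an exponentially-tailed Gaussian quadratic form with a $k_{u}k_{s}$-th-moment Markov bound. That is legitimate as an upper bound (and arguably tighter), but it leaves an $\epsilon^{-k_{u}k_{s}}$ factor absent from the stated bound, so your version implies the lemma only in the asymptotic regime $\epsilon\gtrsim k_{u}^{-3}k_{s}^{-4}$, and the specific constants and exponents of the stated middle term would have to be rebuilt from your moment computation rather than inherited from the paper's Stirling step.
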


Several remarks are in order:

\begin{itemize}
\item The result is asymptotic in nature such that it holds for large SNR and
large $k_{u}k_{s}$ (with typically fixed $k_{s}$) and hence large $n\geq us$
which reflects the mMTC scenario. Specifically, we see that for any fixed
$\epsilon>0$, $\mathbb{P}_{\overline{\text{sRIP}}}\left(  \epsilon\right)
\rightarrow0$ provided that $m\geq m_{0}$ with
\begin{equation}
m_{0}\in O\left(  k_{u}^{4}k_{s}^{6}\log\left(  n\right)  \right)
\label{eq:m-scaling}%
\end{equation}
in the regime of large $n$ and any sub-linear scaling of $k_{u}k_{s}$ with
respect to $n^{(1-3/k_{u})}$, i.e. $\frac{k_{u}^{4}k_{s}^{5}}{n^{(1-3/k_{u})}%
}\rightarrow0$ for any $k_{u}k_{s},n\rightarrow\infty$. Within this parameter
regime channel energies are approximately recovered with small error
$\epsilon$. This will result in a $P_{\text{md}}(\xi)$ which is actually
independent of $k_{u}$ (see Theorem \ref{thm:md2}) and is validated in the
simulation section.

\item It is important to note that we neglect the cases where HiHTP/HiIHT will
move away in the iterations from an initially correct user detection (which
very rarely happens in the simulations). This is reasonable because
asymptotically for high SNR (which we target) this probability tends to zero
anyway since HiHTP/HiIHT will clearly output the correct result provided the
conditions of \ref{thm:HiHTP} are fulfilled as well.

\item The required number of measurements is much less than $n$ which results
in a considerable gain of user capacity since we may exploit the remaining
unused subcarriers to implement \emph{further parallel pilot channels}.
Although this result is asymptotically in nature, the user capacity gain is
also clearly apparent in the simulations.

\item Parameter $\epsilon$ is to be selected such that $F\left(
\epsilon\right)  $ is sufficiently small and depends only on the channel statistics.
\end{itemize}

Let us now turn to the user detection where we set $F^{z}(\xi):=\mathbb{P}%
(\Vert(h_{i}+z_{i}^{\prime})_{\mathcal{A}_{i}}\|^{2}\leq\xi)$.

\begin{theorem}
\label{thm:md2} It holds that
\begin{align*}
P_{\text{md}}\left(  \xi\right)   &  \leq\mathbb{P}_{\overline{\text{sRIP}}%
}\left(  \epsilon\right) \\
&  +2s\left(  u-k_{u}\right)  F^{z}(\xi+4\epsilon)\\
&  +n^{-k_{s}}\operatorname{SNR}^{-k_{s}}\left(  \frac{us}{2k_{s}m}\right)
^{-k_{s}}s\left(  u-k_{u}\right) .
\end{align*}

\end{theorem}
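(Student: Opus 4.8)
The plan is to reduce the missed-detection analysis of the full iterative scheme to a one-shot analysis of the linear estimator $\Psi y=A^{H}y$ on the sRIP event \eqref{eq:e_set}, exactly as prepared in the discussion preceding Lemma~\ref{lem:FFTconc}, and then to absorb the remaining randomness (the channel energy on the true support and the highly damped measurement noise) through the second proxy model \eqref{eq:2edModel}; restricting to the first linear step is legitimate by the remark that the iterations almost surely do not leave an initially correct detection at the targeted high SNR. First I would split off the sRIP failure: conditioning on whether \eqref{eq:e_set} holds and bounding the miss probability by $1$ on its complement yields the first summand $\mathbb{P}_{\overline{\text{sRIP}}}(\epsilon)$, so that it remains to control the miss event on the favourable event.

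On the sRIP event, Lemma~\ref{lem:FFTconc} guarantees that for \emph{every} block and every candidate support $\omega\in\Omega_{i'}$ the estimated block energy $\sum_{j\in\omega}|\langle \Psi y,v_j\rangle|^2$ differs from the effective (signal-plus-noise) block energy $\sum_{j\in\omega}|\langle h+z',v_j\rangle|^2$ by at most $\epsilon$. I would use this twofold: for the active user $i$ under test, its selected energy is at least $\|(h_i+z_i')_{\mathcal{A}_i}\|^2-\epsilon$, while for any inactive user its selected energy is at most its top-$k_s$ noise energy inflated by $\epsilon$. Since the reported set is produced by the top-$k_u$ projection $L_{k_u,k_s}$ followed by the threshold $L_{\xi}^{B}$, and there are exactly $k_u$ active blocks, user $i$ is missed only if some inactive block is spuriously promoted above it; I would therefore union-bound the miss event over the $u-k_u$ inactive blocks and, within each block, over the candidate supports, absorbing the combinatorial $\binom{s}{k_s}$ into a linear-in-$s$ factor (the exponential part being already charged to the $(es/k_s)^{k_s}$ present in $\mathbb{P}_{\overline{\text{sRIP}}}$). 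The recurring factor $2$ and the widening of the threshold $\xi$ into $\xi+4\epsilon$ come from applying the two-sided sRIP slack simultaneously to the tested block and the competing block.

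The two remaining summands then arise from splitting each union-bound term according to the magnitude of the tested user's support energy. In the low-energy regime $\|(h_i+z_i')_{\mathcal{A}_i}\|^2\le\xi+4\epsilon$ the user sits below the slack-widened threshold irrespective of the competitor, which contributes the cumulative-distribution factor $F^z(\xi+4\epsilon)$ and, after the union bound, the term $2s(u-k_u)F^z(\xi+4\epsilon)$. In the complementary high-energy regime the miss forces an inactive block's top-$k_s$ noise energy to exceed a level of order $\xi$ (equivalently, of the order of the signal energy); using the strongly damped per-dimension variance $\sigma^2 m/n^2$ of model \eqref{eq:2edModel} together with the gamma/$\chi^2$ upper-tail inequalities \eqref{eq:Mconc1}--\eqref{eq:Mconc4} of the appendix, this tail is of order $n^{-k_s}\operatorname{SNR}^{-k_s}(\tfrac{us}{2k_sm})^{-k_s}$ per block, and the union over the $s(u-k_u)$ inactive positions produces the third summand. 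Adding the three contributions gives the stated bound.

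I expect the main obstacle to be the high-energy regime: one must couple the \emph{random} signal-plus-noise energy of the active block with the noise energy of a competing inactive block through the sRIP slack $\epsilon$, and then extract a clean closed form with exactly diversity order $k_s$ and prefactor $(\tfrac{us}{2k_sm})^{-k_s}$ from the gamma tail under the damped variance of \eqref{eq:2edModel}. The constant bookkeeping — in particular justifying the factor $4$ multiplying $\epsilon$ and charging $\binom{s}{k_s}$ to a linear rather than exponential factor — is delicate, and conceptually the step most in need of care is the reduction of the top-$k_u$ selection failure to a per-inactive-block union bound, since those events are neither independent nor nested.
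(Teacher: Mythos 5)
Your proposal follows the paper's proof essentially step by step: the same decomposition of the miss event into a threshold-failure part and a spurious-promotion part over the $u-k_u$ inactive blocks, the same use of Lemma~\ref{lem:FFTconc} to charge $\mathbb{P}_{\overline{\text{sRIP}}}(\epsilon)$ and to transfer estimated block energies to true energies within slack $\epsilon$, the same reduction of an inactive block's top-$k_{s}$ energy to a union over its $s$ single coordinates (which is exactly how the combinatorial $\binom{s}{k_s}$ is replaced by a linear-in-$s$ factor), and the same split into a low-energy regime yielding $2s(u-k_u)F^{z}(\xi+4\epsilon)$ and a high-energy regime yielding the diversity term. One correction on the tool for the third summand: it does not come from the concentration inequalities \eqref{eq:Mconc1}--\eqref{eq:Mconc4}, which give only exponential tails against deterministic thresholds; the paper instead invokes Proposition~\ref{prop:max} (the block-correlator computation, here with a single noise coordinate of damped variance $2k_s\sigma^2 m/(nus)$, so that $B_1(1,k_s)=1$), integrating the noise tail against the near-origin density of the random channel energy --- this integration is precisely what produces the polynomial $\operatorname{SNR}^{-k_s}$ factor, i.e.\ the ``gamma tail'' you correctly flagged as the step most in need of care.
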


We see that the diversity order is the same as in Theorem \ref{thm:md1}
although much less measurements are required. Note that, for this result to be
meaningful, we need to establish that by virtue of eq. (\ref{eq:m-scaling})
the SNR scaling of Theorem \ref{thm:md2} is worse than the SNR scaling in
Lemma \ref{lem:FFTconc}. We have
\begin{align*}
&  n^{-k_{s}}\operatorname{SNR}^{-k_{s}}\left(  \frac{us}{2k_{s}m}\right)
^{-k_{s}}s\left(  u-k_{u}\right)  \\
&  \geq n^{-2k_{s}}\operatorname{SNR}^{-k_{s}}\left(  2k_{u}^{4}k_{s}^{7}%
\log\left(  us\right)  \right)  ^{k_{s}}s\left(  u-k_{u}\right)  \\
&  \geq n^{-2k_{s}}\operatorname{SNR}^{-k_{s}}k_{s}\left(  u-k_{u}\right)  ,
\end{align*}
which is, technically for $u-k_{u}\geq32/\sqrt{2}$, clearly much larger for
any SNR point and also falls much slower in SNR than the SNR dependent term in
Lemma \ref{lem:FFTconc} so that $P_{\text{md}}$ is indeed dominated by
$n^{-k_{s}}\operatorname{SNR}^{-k_{s}}\left(  \frac{us}{2k_{s}m}\right)
^{-k_{s}}s\left(  u-k_{u}\right)  $.

We will validate now our results in the next section.

\section{Evaluations and Simulations}

\label{sec:evaluation}

In the simulations we used HiIHT since it is faster. We tested the performance
of HiIHT in numerical simulations using the system parameters $n=1024$, $1\leq
u\leq16$, $1\leq s\leq256$. The size of the observation window was taken to be
$1\leq m\leq300$. For simplicity we assume that there is \emph{exactly} an
$(k_{u},k_{s})$-sparse multiuser channel so that we can set $\xi=0$ for the
sake of exposition. All performance metrics are user-wise and clearly this
performance does not depend on which users are actually active.

\begin{figure2}
[ptb]
\centering\includegraphics[width=2\columnwidth]{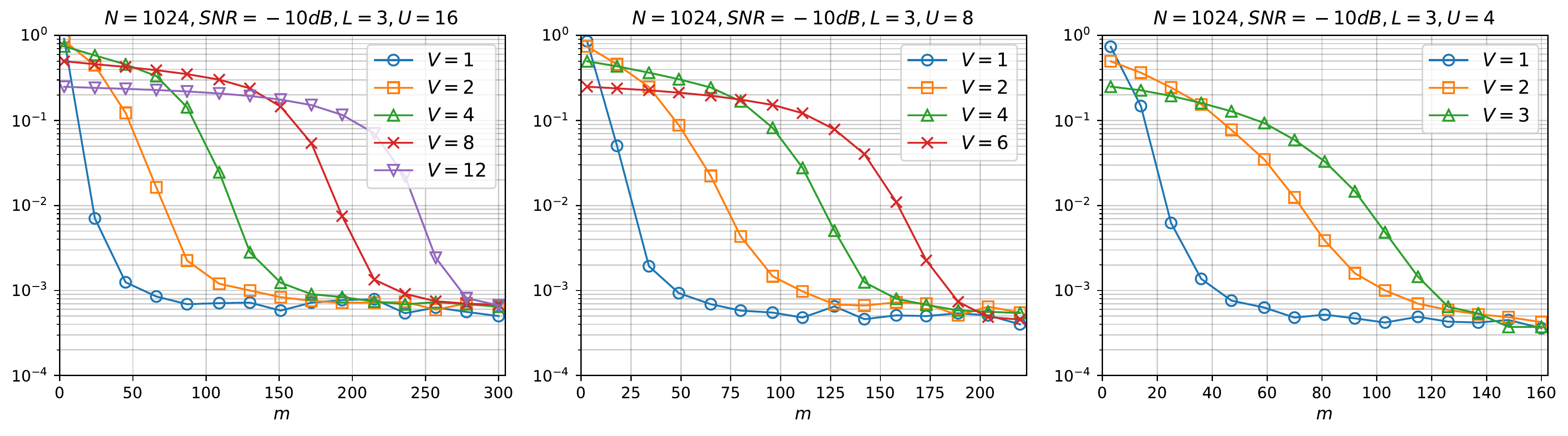}\caption{Average
MSE$_{i}$ for the active users dependent on the number of measurements $m$ for
$n=1024$, SNR$=-10$dB, and $L=3$. The user number from left to right is
$U=\{16,8,4\}$.}\label{fig:mDep}
\end{figure2}

Our first simulation in Fig. \ref{fig:mDep} shows the dependence on $m$ where
we depict the unnormalized frequency domain MSE$_{i}$ per subcarrier for the
active users \emph{averaged} over the runs, see eq. (\ref{eq:mse}). We see
that the 'phase transitions' occur for $m$ far less than $n$ as expected. From
this simulation we obtain that $m=300$ is sufficient for the targeted
parameter regime.

In Fig. \ref{fig:L3U4} to Fig. \ref{fig:L6U16} we simulated the user detection
performance. Obviously, simulations and upper bounds coincide quite nicely and
that in all cases the slope, i.e. diversity gain, is correctly represented
justifying our approach for the analysis. Generally, the bounds qualitatively
reflect the dependence on the system parameter in all cases provided $m$ is
selected sufficiently large such that HiIHT operates far beyond the phase
transition. This is clearly visible in Fig.'s \ref{fig:L6U8}, \ref{fig:L6U16}
where the performance is already worse for larger delay spread and even turns
into an error floor for $k_{u}=12$ in Fig. \ref{fig:L6U16}. There is a small
gap due to the union bound approach and the parameter $\epsilon$ which we
never make explicit. Hence, generally, the parameter setting has to be
carefully selected for the theorems to hold.

The user capacity gain is clearly visible in the simulation and is more than
300\% since $m/n\leq1/3$ which is a promising result. We have not incorporated
the block correlator since it will operate in the order of our upper bounds as
shown in \cite{Lee2013_TransWC}. This implies that we have not come close to
the asymptotic regime where the hierarchical thresholding algorithm
outperforms the block correlation detector.

\begin{figure}[ptb]
\centering\includegraphics[width=\columnwidth]{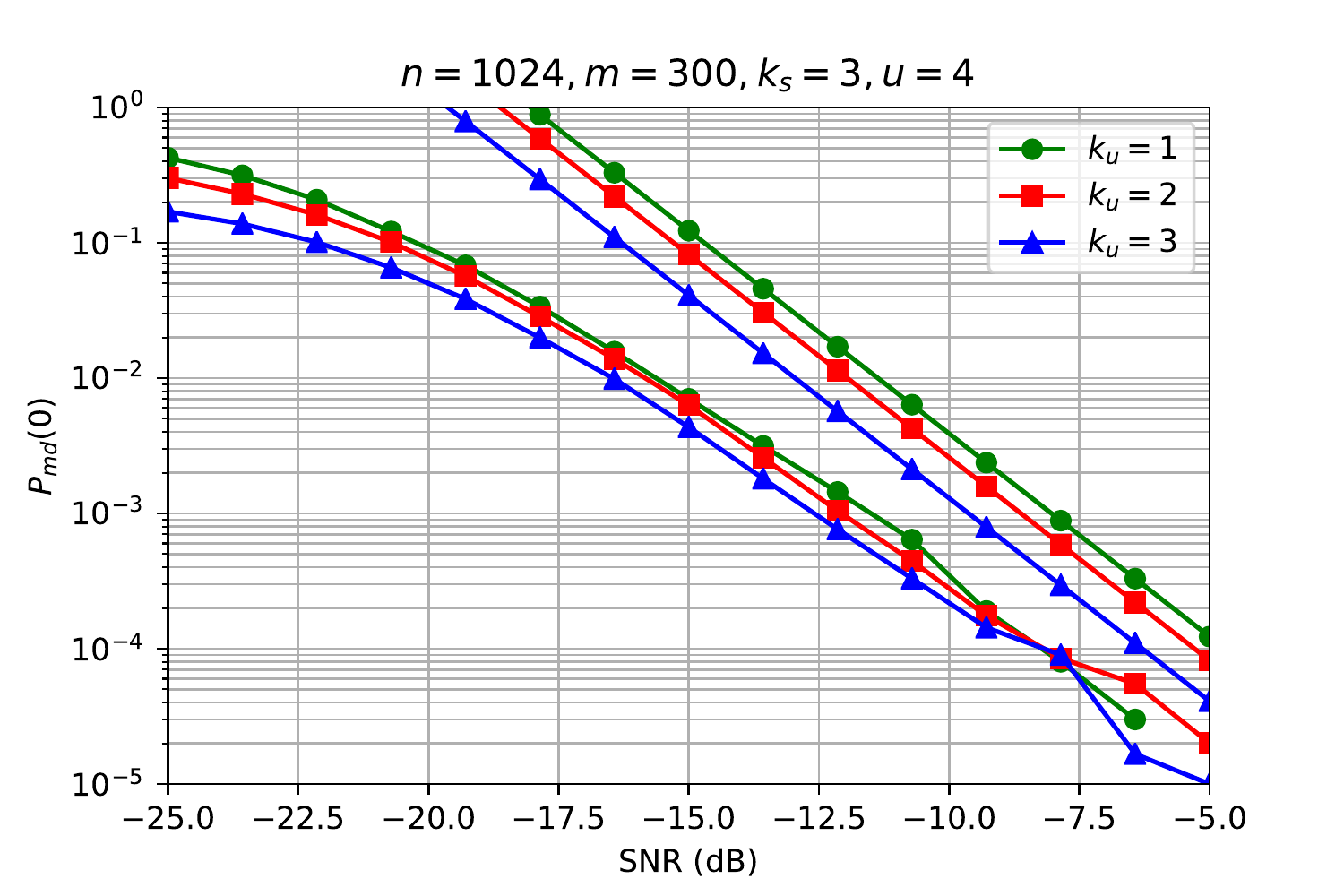}\caption{The
Figure depicts $P_{\text{md}}$ over SNR for $n=1024,m=300,k_{s}=3,u=4.$}%
\label{fig:L3U4}%
\end{figure}

\begin{figure}[ptb]
\centering\includegraphics[width=\columnwidth]{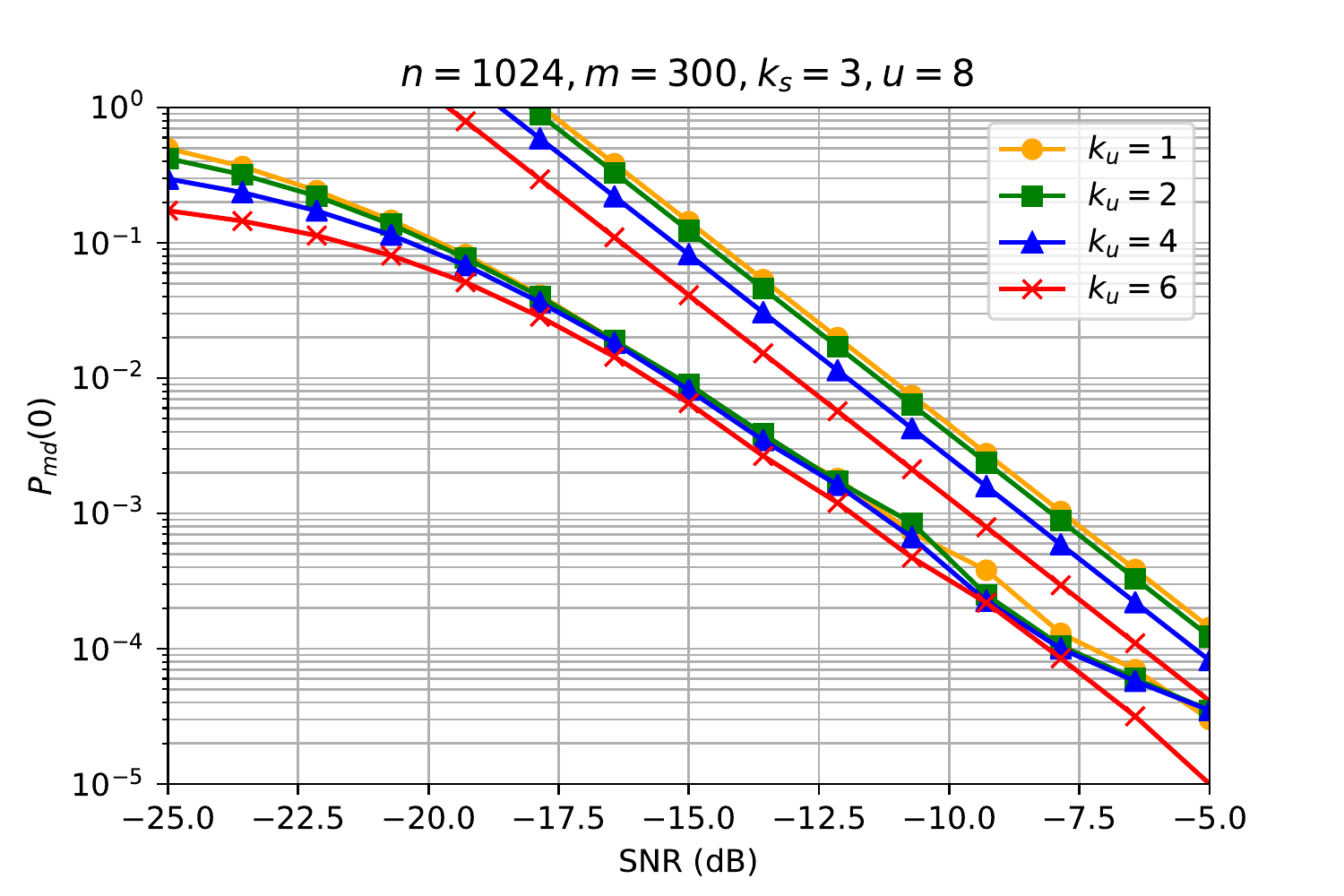}\caption{The
Figure depicts $P_{\text{md}}$ over SNR for $n=1024,m=300,k_{s}=3,u=8.$}%
\label{fig:L3U8}%
\end{figure}

\begin{figure}[ptb]
\centering\includegraphics[width=\columnwidth]{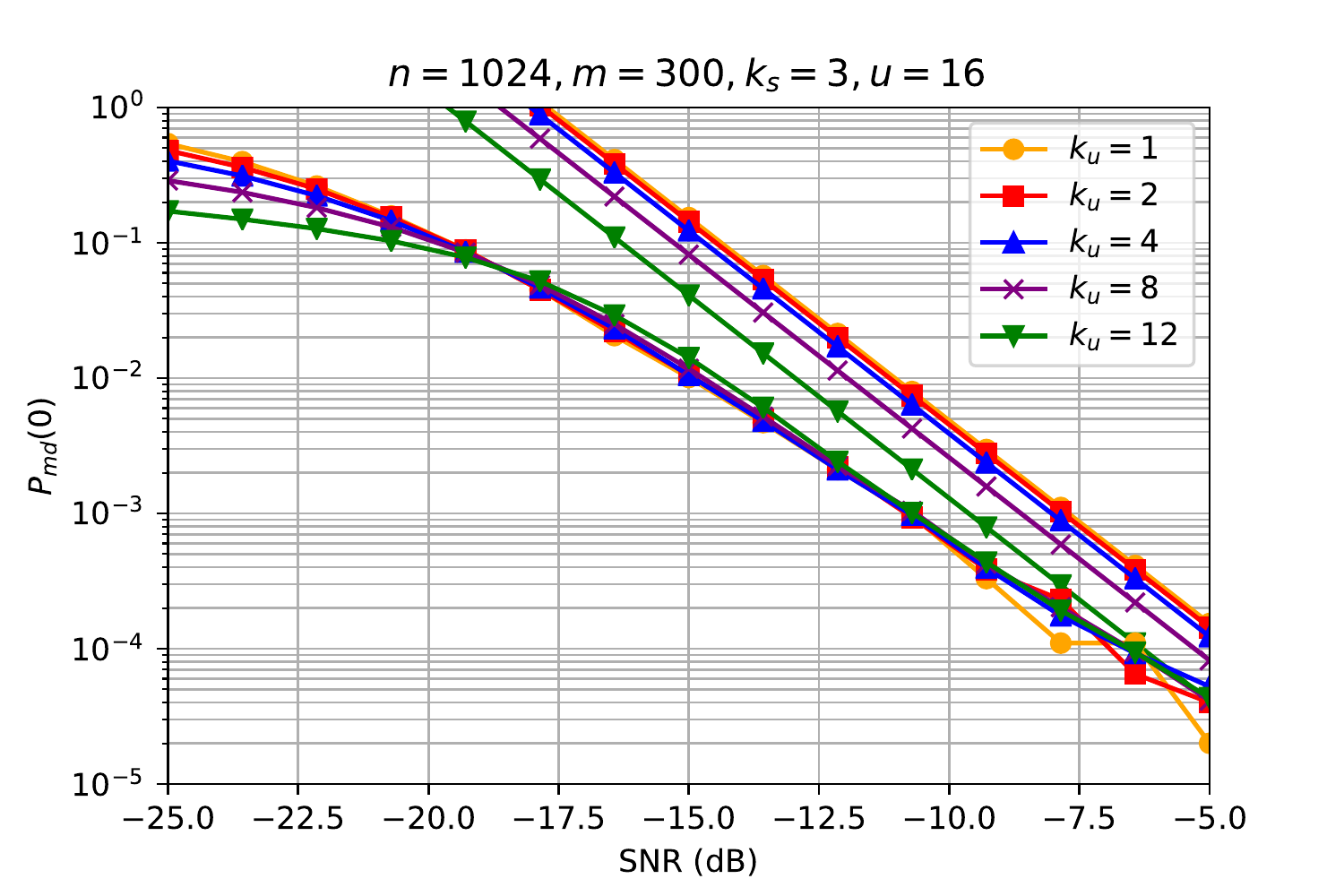}\caption{The
Figure depicts $P_{\text{md}}$ over SNR for $n=1024,m=300,k_{s}=3,u=16.$}%
\label{fig:L3U16}%
\end{figure}

\begin{figure}[ptb]
\centering\includegraphics[width=\columnwidth]{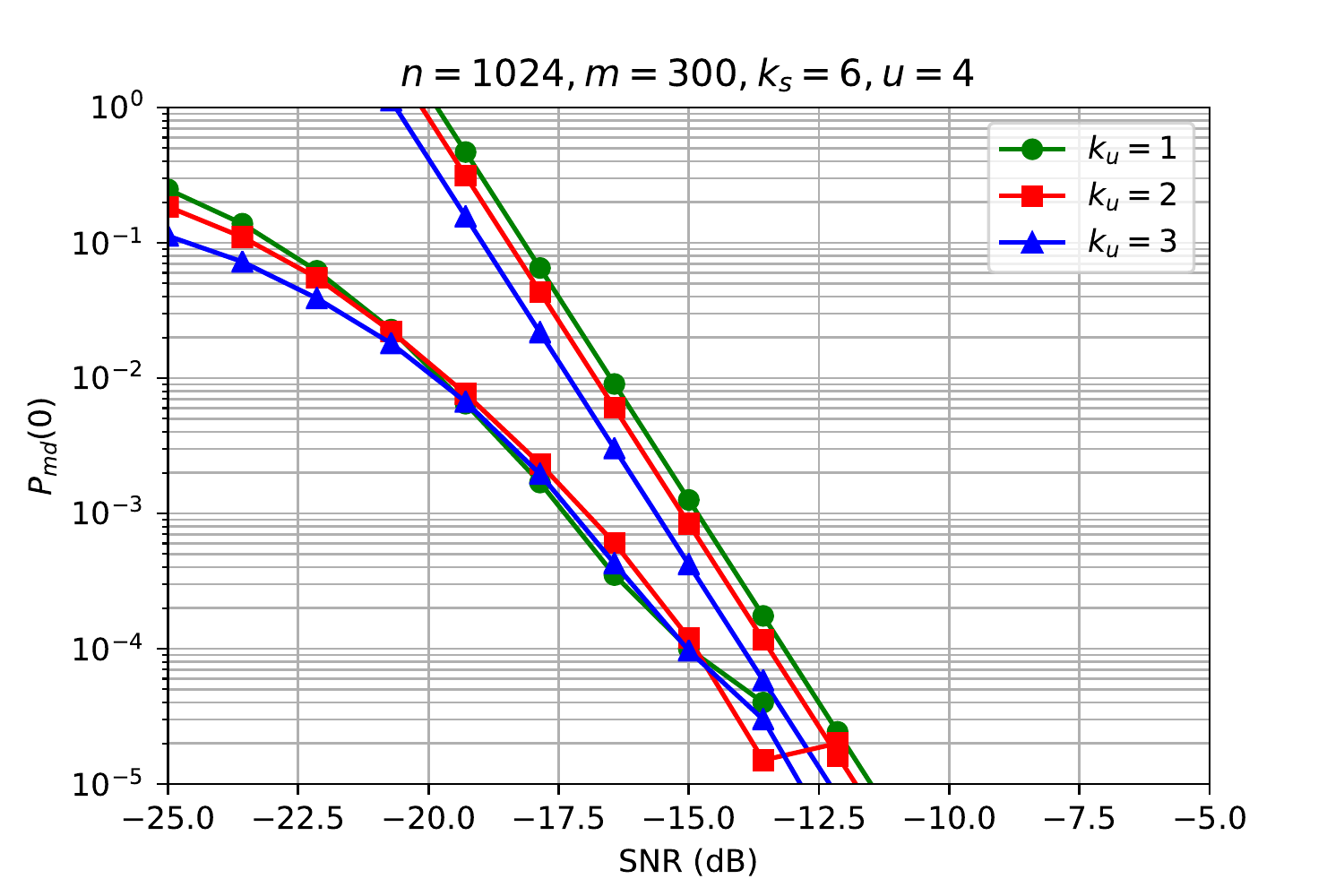}\caption{The
Figure depicts $P_{\text{md}}$ over SNR for $n=1024,m=300,k_{s}=6,u=4.$}%
\label{fig:L6U4}%
\end{figure}

\begin{figure}[ptb]
\centering\includegraphics[width=\columnwidth]{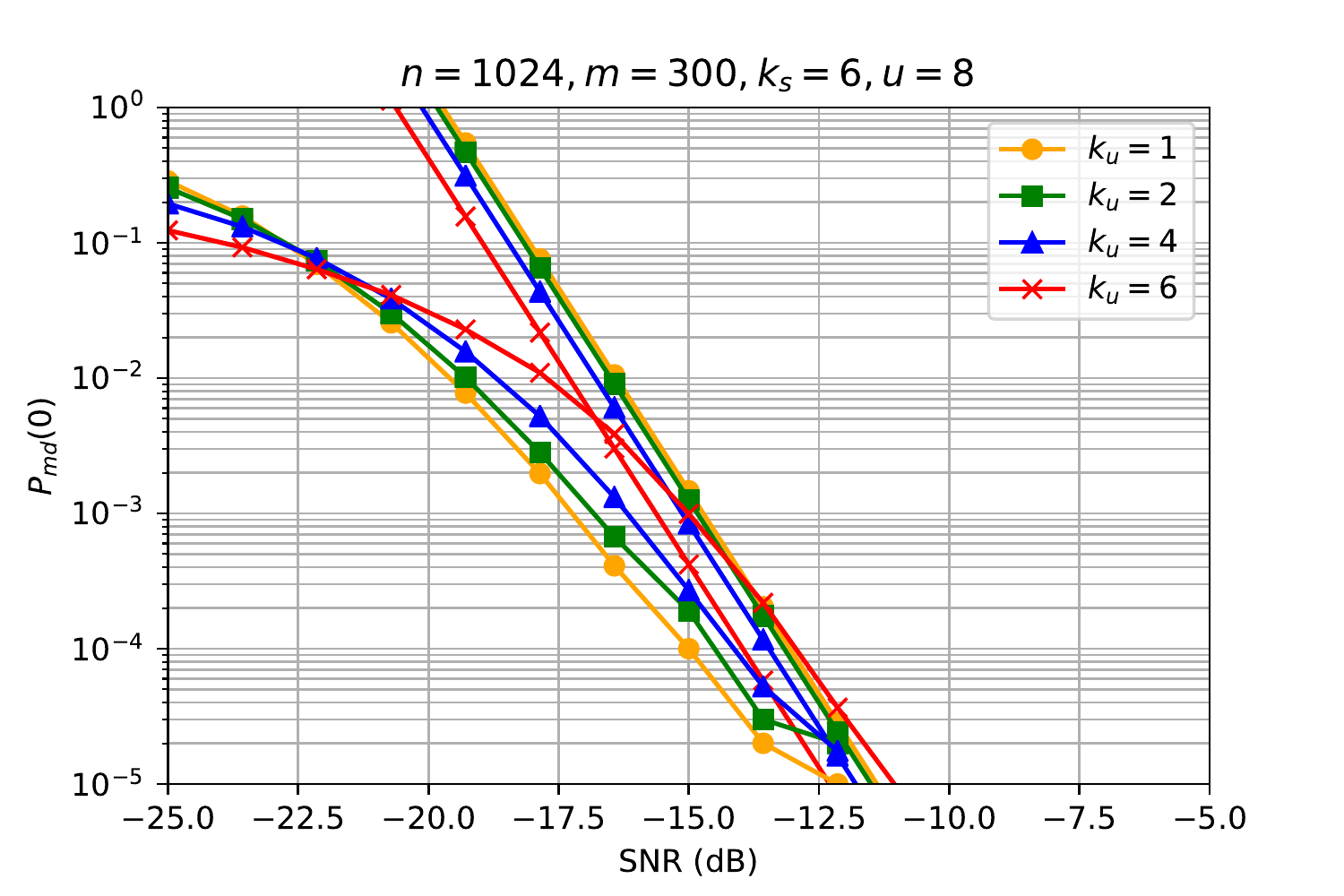}\caption{The
Figure depicts $P_{\text{md}}$ over SNR for $n=1024,m=300,k_{s}=6,u=8.$}%
\label{fig:L6U8}%
\end{figure}

\begin{figure}[ptb]
\centering\includegraphics[width=\columnwidth]{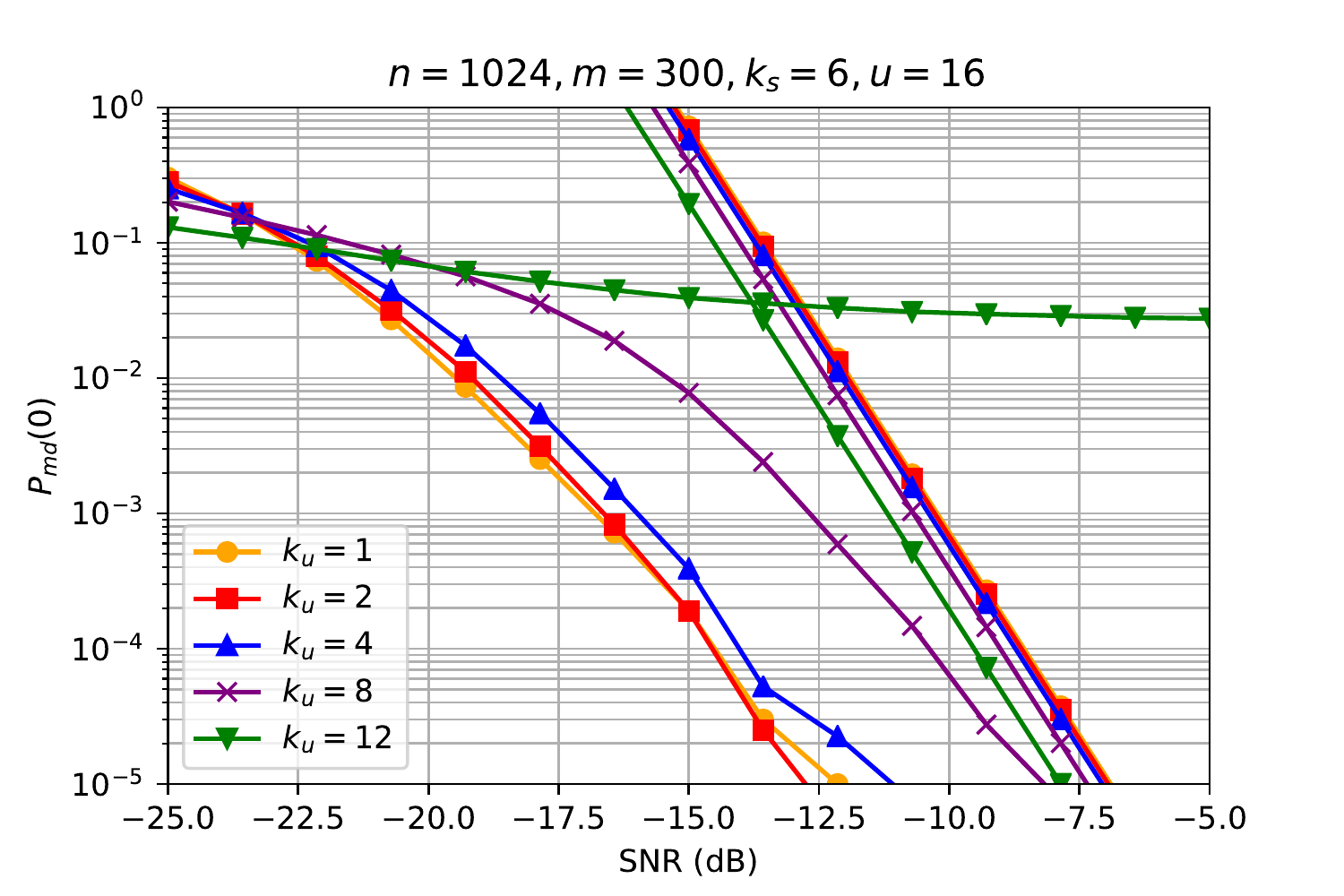}\caption{The
Figure depicts $P_{\text{md}}$ over SNR for $n=1024,m=300,k_{s}=6,u=16.$}%
\label{fig:L6U16}%
\end{figure}In Fig. \ref{fig:MSE} we evaluate finally the corresponding MSE
performance. Also here we see that the performance of the detector is
sufficient to obtain qualitatively good channel estimates in 'one shot' random
access scenarios.

\begin{figure2}
\centering\includegraphics[width=2\columnwidth]{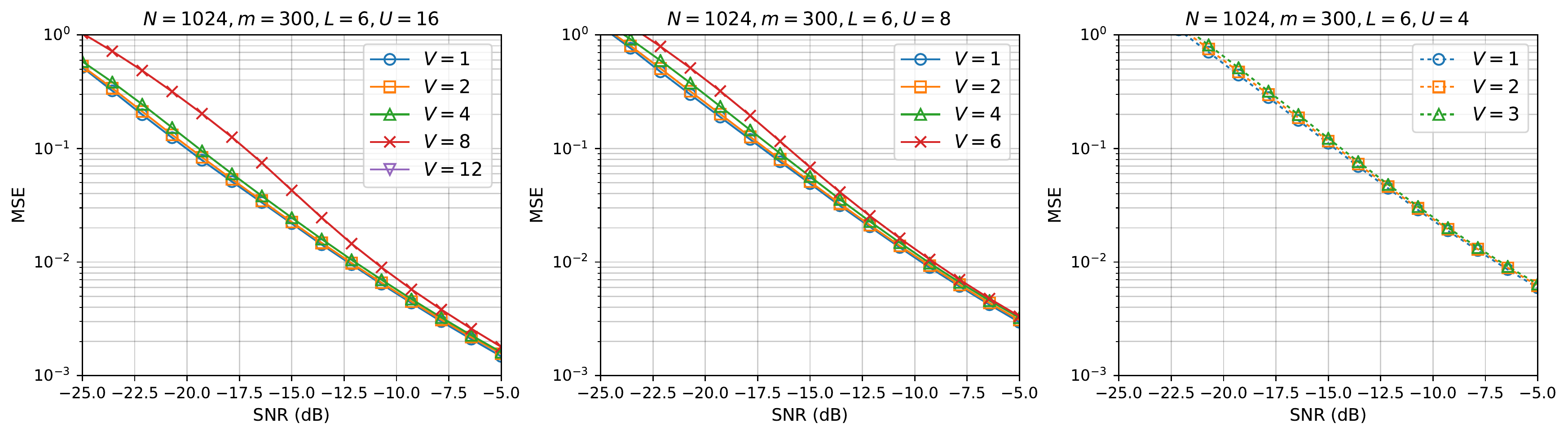}\caption{Average
MSE$_{i}$ dependent on the number of measurements $m$.}\label{fig:MSE}
\end{figure2}

\section{Conclusions and Outlook}

In this paper, the performance of hierarchical thresholding algorithms under
noise is studied for certain indicators, in particular, the block error and
missed detection probability. We provide upper bounds on the missed detection
probability in terms of the diversity order and relate them to the classical
block correlation detector. Our findings are that in a certain sparse
parameter regime the HiHTP detector can outperform the classical detection
schemes. This parameter regime is expected to arise ubiquitously in mMTC scenarios.

Very recently, in a series of papers, mMTC system design is combined with
\emph{massive MIMO} which adds another design parameter (number of antennas)
to the problem. In recent work \cite{Liu2018_ARXIV-1}, the Approximate Message
Passing (AMP) algorithm is considered for the demanding setting where sparsity
is growing linearly with the dimensions showing that detection probability can
be driven to zero with large number of antennas. On the other hand frequency
diversity is not considered in \cite{Liu2018_ARXIV-1}. Another work can be
found in \cite{Carvalho2017_TWC} where approximative analysis is provided without considering multipath. Hence, it would be interesting to see the
effect of multipath in our work in \cite{Wunder2018_ARXIV-2} where
error rates tend to decrease with the same diversity order.

\section{Acknowledgements}

This work was carried out within DFG grants WU 598/7-1 and WU 598/8-1 (DFG
Priority Program on Compressed Sensing). The research of IR and JE has been
supported by the DFG project EI 519/9-1 (SPP1798 CoSIP), the Templeton
Foundation, the EU (RAQUEL), the BMBF (Q.com), and the ERC (TAQ).

\appendix

\subsection{Proof of Theorem \ref{thm:md1}}

We need the following proposition for any of the following theorems. The
following proposition is the 'high SNR' probability approximation of
misdetection events which is used in Theorem \ref{thm:md2} and Proposition
\ref{prop:FFTconc}.

\begin{proposition}
\label{prop:max} Let $z_{0},\ldots,z_{u-1}\sim\mathcal{CN}\left(  0,\sigma
^{2}I_{m}\right)  $ and $(h_{i})_{\mathcal{A}_{1}}\sim\mathcal{CN}\left(
0,\operatorname{diag}\left(  \sigma_{h,0}^{2},\sigma_{h,1}^{2},\ldots
,\sigma_{h,k_{s}-1}^{2}\right)  \right)  $ with $\mathcal{A}\subset\lbrack s]$
some index set of size $|\mathcal{A}|=k_{s}<s$. Then%
\begin{align*}
&  \mathbb{P}\left(  \max\nolimits_{j\neq i}\|z_{j}\|^{2}\geq\Vert
(h_{i})_{\mathcal{A}_{1}}\|^{2}\right) \\
&  =\frac{u\sigma^{2k_{s}}}{\prod_{i=0}^{k_{s}-1}\sigma_{h,i}^{2}}\sum
_{j=0}^{m-1}\frac{\Gamma\left(  k_{s}+j\right)  }{\Gamma\left(  k_{s}\right)
j!}+o\left(  \operatorname{SNR}^{k_{s}}\right) .
\end{align*}

\end{proposition}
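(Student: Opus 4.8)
The plan is to condition on the channel energy $X:=\|(h_i)_{\mathcal{A}_1}\|^2$ and exploit that the competitors $\|z_j\|^2$, $j\neq i$, are then i.i.d.\ and independent of $X$. Writing $\bar{G}_m(s)=e^{-s}\sum_{\ell=0}^{m-1}s^{\ell}/\ell!$ for the tail of a Gamma$(m,1)$ variable (note that $\|z_j\|^2/\sigma^2$ is Gamma$(m,1)$-distributed), conditioning yields the exact identity
\[
\mathbb{P}\left(\max\nolimits_{j\neq i}\|z_j\|^2\geq X\,\middle|\,X\right)=1-\bigl(1-\bar{G}_m(X/\sigma^2)\bigr)^{u-1}.
\]
Bernoulli's inequality sandwiches the bracket as $\bar{G}_m(X/\sigma^2)\leq 1-(1-\bar{G}_m(X/\sigma^2))^{u-1}\leq (u-1)\bar{G}_m(X/\sigma^2)$, so after taking expectations the max-exceedance is pinned, two-sidedly, between the single-competitor probability $\mathbb{P}(\|z_0\|^2\geq X)$ and $(u-1)$ times it; the stated right-hand side is the upper envelope (using $u-1\leq u$). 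The real work is therefore to evaluate the single-competitor probability $\mathbb{P}(\|z_0\|^2\geq X)=\mathbb{E}_X[\bar{G}_m(X/\sigma^2)]$ exactly to leading order as $\sigma^2\to 0$.

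For this I would use that $X$ is a sum of $k_s$ independent exponentials with means $\sigma_{h,\ell}^2$, whose density near the origin is $f_X(x)=\frac{x^{k_s-1}}{\Gamma(k_s)\prod_{\ell=0}^{k_s-1}\sigma_{h,\ell}^2}\,(1+O(x))$. The decisive rescaling is $x=\sigma^2 t$, which both isolates the $O(\sigma^2)$ region that carries the mass and renders $\bar{G}_m$ $\sigma$-independent:
\[
\mathbb{P}(\|z_0\|^2\geq X)=\int_0^\infty \sigma^2 f_X(\sigma^2 t)\,\bar{G}_m(t)\,dt=\frac{\sigma^{2k_s}}{\Gamma(k_s)\prod_{\ell=0}^{k_s-1}\sigma_{h,\ell}^2}\int_0^\infty t^{k_s-1}\bar{G}_m(t)\,dt+o(\sigma^{2k_s}).
\]
Dominated convergence (the integrand is bounded by the fixed integrable $t^{k_s-1}\bar{G}_m(t)$ for small $\sigma$, the tail being killed by $\bar{G}_m$ and the $O(x)$ density correction contributing only $O(\sigma^{2k_s+2})$) makes this a genuine two-sided asymptotic equality. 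Expanding the Gamma tail and integrating termwise via $\int_0^\infty t^{k_s-1+\ell}e^{-t}\,dt=\Gamma(k_s+\ell)$ gives $\int_0^\infty t^{k_s-1}\bar{G}_m(t)\,dt=\sum_{\ell=0}^{m-1}\Gamma(k_s+\ell)/\ell!$, i.e.\ exactly $\Gamma(k_s)\,B_1(m,k_s)$. Multiplying by the envelope count reproduces the claimed leading term $\frac{u\sigma^{2k_s}}{\prod_{\ell}\sigma_{h,\ell}^2}\sum_{\ell=0}^{m-1}\frac{\Gamma(k_s+\ell)}{\Gamma(k_s)\ell!}$, and the single-competitor lower bound of the same order certifies that the diversity exponent $k_s$ and the factor $B_1(m,k_s)$ are correct on both sides.

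The main obstacle is the prefactor. The sandwich above only shows the true leading constant lies between the single-pair value and $(u-1)$ times it; it is not exactly $(u-1)$ (let alone $u$), because the dominant scenario is that $X$ is anomalously small, of order $\sigma^2$, in which case every competitor exceeds $X$ simultaneously. Thus the multi-exceedance events are positively correlated and each contributes at the same order $\sigma^{2k_s}$ --- concretely, $\mathbb{E}_X[\bar{G}_m(X/\sigma^2)^2]=\Theta(\sigma^{2k_s})$, not $o(\sigma^{2k_s})$ --- so the naive linearization $1-(1-\bar{G}_m)^{u-1}\approx (u-1)\bar{G}_m$ is not valid inside the integral. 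The exact leading constant is the convergent integral $\frac{1}{\Gamma(k_s)\prod_{\ell}\sigma_{h,\ell}^2}\int_0^\infty t^{k_s-1}\bigl[1-(1-\bar{G}_m(t))^{u-1}\bigr]\,dt$, obtained by carrying the full bracket through the same rescaling; evaluating it (rather than Bernoulli-bounding it) is what would turn the stated relation into a sharp equality, with $u$ understood as the leading-order upper estimate furnished by the union bound.
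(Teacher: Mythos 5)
Your proposal reproduces the computational core of the paper's own proof: the paper likewise conditions on $X:=\Vert(h_i)_{\mathcal{A}}\Vert^2$, inserts the Gamma tail of $\Vert z_j\Vert^2$ (the paper's eq.~\eqref{eq:Cdf}) and the small-argument density $f_X(x)=\frac{x^{k_s-1}}{\Gamma(k_s)\prod_{\ell}\sigma_{h,\ell}^2}+o(x^{k_s})$ (eq.~\eqref{eq:Pdf}), and integrates termwise via $\int_0^\infty x^{k-1}e^{-Ax}dx=\Gamma(k)A^{-k}$ to obtain $\sigma^{2k_s}\prod_\ell\sigma_{h,\ell}^{-2}B_1(m,k_s)$; your rescaling $x=\sigma^2 t$ with dominated convergence is a cleaner justification of exactly that step, and your upper envelope is the paper's union bound.

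The substantive point is your final paragraph, and you should trust it: what you flag as the ``main obstacle'' is not a gap in your argument but a genuine gap in the paper's. To get the matching lower bound (the ``converse''), the paper writes
\[
1-\left(1-\mathbb{P}\left(\Vert z_2\Vert^2>X\right)\right)^{u-1}\geq\left(u-1\right)\mathbb{P}\left(\Vert z_2\Vert^2>X\right)-e^{u}\,\mathbb{P}^{2}\left(\Vert z_2\Vert^2>X\right)
\]
and declares the converse established, i.e.\ it discards the squared term as lower order. But the product formula on the left is valid only conditionally on $X$ (unconditionally the events $\{\Vert z_j\Vert^2>X\}$ are dependent through the common $X$), so $\mathbb{P}(\Vert z_2\Vert^2>X)$ must be read as the random variable $\bar{G}_m(X/\sigma^2)$, with $\bar{G}_m(t)=e^{-t}\sum_{\ell=0}^{m-1}t^{\ell}/\ell!$ as in your notation; and, exactly as you observe, $\mathbb{E}_X[\bar{G}_m(X/\sigma^2)^2]=\Theta(\sigma^{2k_s})$ is of the \emph{same} order as the leading term, since the dominant event is $X=O(\sigma^2)$, in which case all $u-1$ competitors exceed $X$ simultaneously. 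Consequently the stated equality is false in general (not merely unproven), unless one reads the error term literally as $o(\operatorname{SNR}^{k_s})$, which is vacuous; the intended $o(\operatorname{SNR}^{-k_s})$ reading fails. A concrete check: for $k_s=m=1$, $\sigma_{h,0}^2=1$, $u=3$ one computes
\[
\mathbb{P}\left(\max\nolimits_{j\neq i}\Vert z_j\Vert^2\geq X\right)=\frac{2\sigma^2}{1+\sigma^2}-\frac{\sigma^2}{2+\sigma^2}=\tfrac{3}{2}\sigma^2+O(\sigma^4),
\]
whereas the proposition asserts $3\sigma^2+o(\sigma^2)$, and even the union-bound constant $u-1=2$ is not attained. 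Your integral $\frac{1}{\Gamma(k_s)\prod_\ell\sigma_{h,\ell}^2}\int_0^\infty t^{k_s-1}\bigl[1-(1-\bar{G}_m(t))^{u-1}\bigr]dt$ is the correct leading constant (it gives $\tfrac{3}{2}$ in this example). The damage is limited where the proposition is invoked as an upper bound, as in Theorem~\ref{thm:md1}, which your argument fully covers; but the paper also uses the $\geq$ direction for the classical correlator in \eqref{eq:lb_c}, where the same objection applies --- there the true constant grows only polylogarithmically in $u$, roughly like $(\log u)^{k_s}$, rather than linearly, which weakens the quantitative comparison behind Theorem~\ref{thm:asymp}.
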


\begin{proof}
Fix $i=1$. Set $\Vert\left(  h_{1}\right)  _{\mathcal{A}}\Vert^{2}=:X$.
Clearly, by the union bound
\[
\mathbb{P}\left(  \max\nolimits_{j>1}\Vert z_{j}\Vert^{2}>X\right)
\leq\left(  u-1\right)  \mathbb{P}\left(  \Vert z_{2}\Vert^{2}>X\right)
\]
and conversely%
\begin{align*}
&  \mathbb{P}\left(  \max\nolimits_{j>1}\Vert z_{j}\Vert^{2}>X\right)  \\
&  =1-\left(  1-\mathbb{P}\left(  \Vert z_{2}\Vert^{2}>X\right)  \right)
^{u-1}\\
&  =1-\exp\left(  \left(  u-1\right)  \log\left(  1-\mathbb{P}\left(  \Vert
z_{2}\Vert^{2}>X\right)  \right)  \right)  \\
&  \geq1-\exp\left(  -\left(  u-1\right)  \mathbb{P}\left(  \Vert z_{2}%
\Vert^{2}>X\right)  \right)  \\
&  \geq\left(  u-1\right)  \mathbb{P}\left(  \Vert z_{2}\Vert^{2}>X\right)  -%
%TCIMACRO{\tsum _{i=2}^{\infty}}%
%BeginExpansion
{\textstyle\sum_{i=2}^{\infty}}
%EndExpansion
\frac{u^{i}}{i!}\mathbb{P}^{i}\left(  \Vert z_{2}\Vert^{2}>X\right)  \\
&  \geq\left(  u-1\right)  \mathbb{P}\left(  \Vert z_{2}\Vert^{2}>X\right)
-\mathbb{P}^{2}\left(  \Vert z_{2}\Vert^{2}>X\right)
%TCIMACRO{\tsum _{i=2}^{\infty}}%
%BeginExpansion
{\textstyle\sum_{i=2}^{\infty}}
%EndExpansion
\frac{u^{i}}{i!}\\
&  \geq\left(  u-1\right)  \mathbb{P}\left(  \Vert z_{2}\Vert^{2}>X\right)
-\mathbb{P}^{2}\left(  \Vert z_{2}\Vert^{2}>X\right)  e^{u}.
\end{align*}
The first inequality follows from $\log\left(  1-x\right)  \leq-x$ and
$\exp\left(  x\right)  $ is increasing over $\mathbb{R}$. The second
inequality follows from expanding the exponential term. The third is due to
$\mathbb{P}(\Vert z_{2}\Vert^{2}>X)<1$. Hence, for fixed $u$ we have the
desired converse. To proceed, we denote the density of $X$ as $f_{X}$. We want
to calculate%
\begin{equation}
\mathbb{P}(\{\lVert z_{2}\rVert^{2}\geq X\})=\int\mathbb{P}(\lVert z_{2}%
\rVert^{2}\geq X|X=x)f_{X}(x)dx,.\label{eq:IntCdfPdf}%
\end{equation}
The complementary cumulative distribution function for the squared norm of the
complex Gaussian noise $z_{1}\sim\mathcal{CN}(0,\frac{\sigma^{2}}{n}I_{m})$ is
given by \cite{Lee2013_TransWC}:%
\begin{equation}
\mathbb{P}\left(  \lVert z_{2}\rVert^{2}>x\right)  =\sum_{j=0}^{m-1}%
\frac{x^{j}\exp\left(  -\frac{x}{\sigma^{2}}\right)  }{j!\sigma^{2j}%
}.\label{eq:Cdf}%
\end{equation}
Moreover, it was shown in \cite{Wang2003_TransComm} and applied in
\cite{Lee2013_TransWC}, that the probability density function of $\gamma$ can
be approximated around $\gamma=0$ in the high $\operatorname{SNR}$ regime by
\begin{equation}
f_{\gamma}\left(  x\right)  =\frac{\prod_{i=0}^{k_{s}-1}\sigma_{h,i}^{-2}%
}{\Gamma\left(  k_{s}\right)  }x^{k_{s}-1}+o\left(  x^{k_{s}}\right)
.\label{eq:Pdf}%
\end{equation}
Since the performance parameters depend only on the behaviour of the function
near the origin, the part $o\left(  x^{k_{s}}\right)  $ can be neglected in
the performance analysis for high $\operatorname{SNR}$ regimes. Plugging
(\ref{eq:Pdf}) and (\ref{eq:Cdf}) into (\ref{eq:IntCdfPdf}) yields:%
\begin{align*}
&  \int\mathbb{P}(\lVert z_{2}\rVert^{2}\geq X\mid X=x)f_{X}(x)dx\\
&  =\frac{\prod_{i=0}^{k_{s}-1}\sigma_{h,i}^{-2}}{\Gamma\left(  k_{s}\right)
}\sum_{j=0}^{m-1}\frac{1}{j!\sigma^{2j}}\int\limits_{0}^{\infty}x^{k_{s}%
-1}x^{j}\exp\left(  -\frac{x}{\sigma^{2}}\right)  dx\\
&  =\frac{\prod_{i=0}^{k_{s}-1}\sigma_{h,i}^{-2}}{\Gamma\left(  k_{s}\right)
}\sum_{j=0}^{m-1}\frac{\Gamma\left(  k_{s}+j\right)  }{j!\sigma^{2j}}\left(
\frac{1}{\sigma^{2}}\right)  ^{-k_{s}-j}\\
&  =\frac{\sigma^{2k_{s}}}{\prod_{i=0}^{k_{s}-1}\sigma_{i}^{2}}\sum
_{j=0}^{m-1}\frac{\Gamma\left(  k_{s}+j\right)  }{\Gamma\left(  k_{s}\right)
j!}%
\end{align*}
where we used that
\begin{equation}
\int_{0}^{\infty}x^{k-1}e^{-Ax}dx=\Gamma\left(  k\right)  A^{-k}%
.\label{eq:integral}%
\end{equation}
Observing that the result is independent of the actual support of $h_{i}$
completes the proof.
\end{proof}

We now provide a proof for the Theorem~\ref{thm:md1}. For the proof it is
sufficient to consider a single fixed user with index $i\in\mathcal{A}^{B}$
out of the atmost $k_{u}$ active users.

\begin{proof}
[Proof of Theorem \ref{thm:md1}]By the definition of the user detection scheme
using HiHTP, the detection of a user requires two conditions to be met: First,
the reconstructed block $\hbar_{i}$ must be larger in $\ell_{2}$-norm than all
blocks which were regarded as inactive. Second, the norm must exceed the
energy threshold of $\sqrt{\xi}$. Thus, for the active $i$-th user to be
missed at the same time at least one inactive user must be detected as active.
In other words, the $\ell_{2}$-norm of the $i$-th block of $\hbar$ is smaller
than at least one block of $\hbar$ with index in the complement $\overline
{\mathcal{A}^{B}}$ of $\mathcal{A}^{B}$. Here, the complement is taken with
respect to ${1,\ldots u}$. We therefore find that the event of
\textquotedblleft missed detection\textquotedblright\ of the $i$-th user is
included as follows:%
\begin{align*}
&  \left\{  \text{missed detection of user }i\right\} \\
&  \subset\left\{  \left\Vert \hbar_{i}\right\Vert \leq\sqrt{\xi}\right\}
\cup\left\{  \max\nolimits_{j\in\overline{\mathcal{A}^{B}}}\left\Vert
\hbar_{j}\right\Vert \geq\left\Vert \hbar_{i}\right\Vert \right\}  .
\end{align*}
Note that this inclusion is proper since the second event does not guarantee
that the $i$-th user is not detected.

Let us abbreviate the difference between the recovered and the original signal
by $d_{i}:=\hbar_{i}-h_{i}$. The first event then implies the following
conditions:
\begin{align*}
&  \left\{  \left\Vert \hbar_{i}\right\Vert \leq\sqrt{\xi}\right\}
\equiv\left\{  \left\Vert \hbar_{i}-h_{i}+h_{i}\right\Vert \leq\sqrt{\xi
}\right\} \\
&  \subseteq\left\{  \left\vert \left\Vert d_{i}\right\Vert -\left\Vert
h_{i}\right\Vert \right\vert \leq\sqrt{\xi}\right\} \\
&  \equiv\left\{  -\sqrt{\xi}\leq\left\Vert d_{i}\right\Vert -\left\Vert
h_{i}\right\Vert \leq\sqrt{\xi}\right\} \\
&  \subseteq\left\{  \left\Vert h_{i}\right\Vert -\left\Vert d_{i}\right\Vert
\leq\sqrt{\xi}\right\}  \equiv\left\{  \sqrt{\xi}+\left\Vert d_{i}\right\Vert
\geq\left\Vert h_{i}\right\Vert \right\} \\
&  \subseteq\left\{  \left\Vert d_{i}\right\Vert \geq\frac{1}{2}\left\Vert
h_{i}\right\Vert \right\}  \cup\left\{  \sqrt{\xi}\geq\frac{1}{2}\left\Vert
h_{i}\right\Vert \right\}
\end{align*}
where $\mathbb{P}(\{\sqrt{\xi}\geq\frac{1}{2}\|h_{i}\|\})=F(4\xi)$. The second
event requires that%
\begin{align*}
&  \left\{  \max\nolimits_{j\in\overline{\mathcal{A}^{B}}}\left\Vert \hbar
_{j}\right\Vert \geq\left\Vert \hbar_{i}\right\Vert \right\} \\
&  \subseteq\left\{  \max\nolimits_{j\in\overline{\mathcal{A}^{B}}}\left\Vert
\hbar_{j}-h_{j}\right\Vert \geq\left\Vert \hbar_{i}-h_{i}+h_{i}\right\Vert
\right\} \\
&  \subseteq\left\{  \max\nolimits_{j\in\lbrack u]}\left\Vert d_{j}\right\Vert
+\left\Vert d_{i}\right\Vert \geq\left\Vert h_{i}\right\Vert \right\} \\
&  \subseteq\left\{  \left\Vert d\right\Vert \geq\frac{1}{2}\left\Vert
h_{i}\right\Vert \right\} .
\end{align*}
After squaring both sides, we can utilize Theorem \ref{thm:HiHTP} to show that%
\[
\mathbb{P}\left(  \left\{  4\left\Vert d\right\Vert ^{2}\geq\left\Vert
h_{i}\right\Vert ^{2}\right\}  \right)  \leq\mathbb{P}_{\overline{\text{RIP}}%
}+\mathbb{P}\left(  \left\{  \frac{4\tau^{2}}{n}\lVert z\rVert^{2}%
\geq\left\Vert h_{i}\right\Vert ^{2}\right\}  \right) .
\]
Now we can invoke Prop. \ref{prop:max} to show that
\begin{align*}
&  \mathbb{P}\left(  \left\{  \frac{4\tau^{2}}{n}\lVert z\rVert^{2}%
\geq\left\Vert h_{i}\right\Vert ^{2}\right\}  \right) \\
&  \leq\left(  \frac{4\tau^{2}\sigma^{2}}{n\sigma_{h}^{2}}\right)  ^{-k_{s}%
}B_{1}\left(  m,k_{s}\right) \\
&  =\left(  4\tau^{2}\right)  ^{k_{s}}n^{-k_{s}}\operatorname{SNR}^{-k_{s}%
}B_{1}(m,k_{s})
\end{align*}
which concludes the proof.
\end{proof}

\subsection{Proof of Lemma \ref{lem:FFTconc}}

The proof of Lemma~\ref{lem:FFTconc} relies on concentration inequalities for
the measurement map $A$ as well as the norm of the signal $h$ and the noise
$z^{\prime}$ introduced as model~\eqref{eq:2edModel}.

Let us first introduce the norm concentration for a Gaussian linear mappings.
Let $G\in\mathbb{C}^{m\times n}$ be a random matrix with i.i.d.\ Gaussian
entries $\left(  G\right)  _{ki}\sim\mathcal{CN}(0,\frac{1}{m})$. The
normalisation ensures that $\mathbb{E}\left\Vert Ax\right\Vert ^{2}=\left\Vert
x\right\Vert ^{2}$ for any vector $x\in\mathbb{C}^{n}$. For a Gaussian random
matrix, it holds that \cite[Lemma 9.8]{Rahut2013}:
\begin{equation}
\mathbb{P}\left(  \left\vert \left\Vert Ax\right\Vert ^{2}-\left\Vert
x\right\Vert ^{2}\right\vert >\epsilon\left\Vert x\right\Vert ^{2}\right)
\leq2e^{-\frac{\epsilon^{2}m}{2}} \label{eq:Mconc1}%
\end{equation}
for every $\epsilon\in(0,1)$.

This concentration inequality allows us to also derive a concentration bound
for the the norm of a Gaussian random vector. To this end, we choose $m=n$ and
the vector $x_{\sigma}\coloneqq(\sigma,\sigma,\ldots,\sigma)^{T}\in
\mathbb{C}^{m}$. With this choice, we have that $x\coloneqq Ax_{\sigma}%
\sim\mathcal{CN}(0,\sigma^{2}I_{m})$ is a random vector with i.i.d.\ Gaussian
entries of variance $\sigma^{2}$. From eq.~\eqref{eq:Mconc1} we conclude that
\begin{align}
2e^{-\frac{\epsilon^{2}m}{2}} &  \geq\mathbb{P}\left(  \left\vert \left\Vert
Ax_{\sigma}\right\Vert ^{2}-\left\Vert x_{\sigma}\right\Vert ^{2}\right\vert
>\epsilon\left\Vert x_{\sigma}\right\Vert ^{2}\right)  \label{eq:Mconc2}\\
&  \geq\mathbb{P}\left(  \left\Vert x\right\Vert ^{2}>\left(  \epsilon
+1\right)  m\sigma^{2}\right)  \nonumber\\
&  +\mathbb{P}\left(  \left\Vert x\right\Vert ^{2}>\left(  1-\epsilon\right)
m\sigma^{2}\right)  \nonumber
\end{align}
and, thus,
\begin{equation}
\mathbb{P}\left(  \left\Vert x\right\Vert ^{2}>\epsilon\right)  \leq
2e^{-\left(  \frac{\epsilon}{m\sigma^{2}}-1\right)  ^{2}\frac{m}{2}%
.}\label{eq:Mconc3}%
\end{equation}
provided that $\epsilon/m\sigma^{2}>1$.

We will also need a concentration inequality for the measurement map $A$, that
was argued to be a uniformly at random subsampled Fourier matrix.
Unfortunately, random Fourier matrices to not directly fulfil a concentration
inequality like \eqref{eq:Mconc1} but only if we restrict $x$ to be sparse.
Suppose $|\operatorname{supp}\left(  x\right)  |\leq k_{s}k_{u}$, then
\cite[Lemma 12.25]{Rahut2013} shows%
\begin{align}
&  \mathbb{P}\left(  \left\vert \left\Vert Ax\right\Vert ^{2}-\left\Vert
x\right\Vert ^{2}\right\vert >\left(  \sqrt{\frac{k_{s}k_{u}}{m}}%
+\epsilon\right)  \left\Vert x\right\Vert ^{2}\right) \nonumber\\
&  \leq2e^{-\frac{\epsilon^{2}m}{2k_{s}k_{u}}\frac{1}{1+2\sqrt{k_{s}k_{u}%
/m}+\epsilon/3}}\nonumber\\
&  \leq2e^{-\frac{\epsilon^{2}m}{2\left(  1+\epsilon\right)  k_{s}k_{u}}}
\label{eq:Mconc4}%
\end{align}
for $m\gg k_{s}k_{u}$.

The assertion of Lemma~\ref{lem:FFTconc} should hold for noisy signals of the
form of \eqref{eq:2edModel}. The model assumes that the signal is of the form
$h^{z}\coloneqq h+z^{\prime}
%{}
\in\mathbb{C}^{n}$, where the support of $h$ is a hierarchically $\left(
k_{u},k_{s}\right)  $-sparse set $\mathcal{A}$ drawn uniformly at random. The
values of the entries are randomly chosen such that $(h^{z})_{\mathcal{A}%
}\sim\mathcal{CN}(0,(\sigma_{h}^{2}+m\sigma^{2}/n^{2})I_{k_{u}k_{s}})$ and
$(h^{z})_{\mathcal{A}^{C}}\sim\mathcal{CN}(0,m\sigma^{2}/n^{2}I_{n-k_{u}k_{s}%
})$.

It is important to note that in this model the signal $h^{z}$ is not sparse
due to the noise contribution. The following proposition will extend the
concentration inequality for subsampled Fourier measurements to this signal model.

\begin{proposition}
\label{prop:FFTconc} Let $A\in\mathbb{C}^{m\times n}$ be a randomly subsampled
FFT and let $h,z^{\prime}$ obey the random model above. Then, it holds:
\begin{align*}
&  \mathbb{E}_{h}\mathbb{P}\left(  \left\vert \left\Vert A\left(  h+z^{\prime
}\right)  \right\Vert ^{2}-\left\Vert h+z^{\prime}\right\Vert ^{2}\right\vert
>\epsilon\left\Vert h+z^{\prime}\right\Vert ^{2}|h\right) \\
&  \leq4e^{-\frac{\left(  \epsilon-3t\right)  ^{2}m}{2\left(  1+\epsilon
-3t\right)  k_{s}k_{u}}}+\frac{6\Gamma\left(  k_{u}k_{s}/2\right)  }%
{\Gamma\left(  k_{u}k_{s}\right)  }\left(  \frac{t^{2}\operatorname{SNR}%
n}{2\left(  1+\epsilon\right)  }\right)  ^{-k_{s}k_{u}}%
\end{align*}
for sufficiently large SNR.
\end{proposition}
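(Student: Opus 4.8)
The plan is to exploit that, although $h^z = h + z'$ is not sparse, it is a small dense perturbation of the $(k_u,k_s)$-sparse vector obtained by restricting to the active support $\mathcal{A}$. Write $u := h^z_{\mathcal{A}}$ and $w := h^z_{\mathcal{A}^C} = z'_{\mathcal{A}^C}$, so that $h^z = u + w$ with $u$ supported on a set of size $k_uk_s$ and $w$ a vector of highly damped noise (each entry of variance $m\sigma^2/n^2$). First I would condition on a realisation of $h^z$ and expand, using that $u$ and $w$ have disjoint supports,
\[
\|A h^z\|^2 - \|h^z\|^2 = \bigl(\|Au\|^2-\|u\|^2\bigr) + \|Aw\|^2 - \|w\|^2 + 2\,\mathrm{Re}\langle Au, Aw\rangle .
\]
The first bracket is the genuinely sparse term and is controlled over the randomness of the subsampled Fourier matrix $A$ by the sparse concentration inequality \eqref{eq:Mconc4}, applied to the $k_uk_s$-sparse vector $u$.

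The three remaining contributions $\|Aw\|^2$, $\|w\|^2$ and $2\,\mathrm{Re}\langle Au,Aw\rangle$ are the source of the relaxation $\epsilon \mapsto \epsilon - 3t$: on a good event I would bound each of them by $t\|h^z\|^2$, so that their sum accounts for the $3t$. For the first two this follows from the crude operator-norm estimate $\|Aw\|^2 \le \|A\|_{\mathrm{op}}^{2}\|w\|^2 = (n/m)\|w\|^2$, valid for every $A$ since $A$ is a rescaled row-subsampled unitary DFT with $\|A\|_{\mathrm{op}}^{2}=n/m$; for the cross term it follows from Cauchy--Schwarz together with the upper bound $\|Au\|^2\le(1+\epsilon)\|u\|^2$ supplied by the same concentration event. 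Each bound reduces to requiring $(n/m)\|w\|^2 \lesssim t^{2}\|h^z\|^2$. Defining the good event $G$ as exactly this inequality -- an event depending only on $h$ and $z'$, not on $A$ -- I would combine it with \eqref{eq:Mconc4} evaluated at tolerance $\epsilon-3t$ to obtain the first summand $4e^{-(\epsilon-3t)^{2} m/(2(1+\epsilon-3t)k_{s}k_{u})}$; the factor $4$ rather than $2$ is routine bookkeeping from the two-sided estimate surviving the average $\mathbb{E}_h$.

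It remains to bound $\mathbb{P}(G^c)$ over the signal and noise. Since $\|h^z\|^2 \ge \|u\|^2$, the complementary event is contained in the signal-fade event $\{\|u\|^2 < (n/m)\|w\|^2/(ct^{2})\}$, i.e.\ the on-support energy being atypically small relative to the damped tail energy. Here I would use the same small-argument density expansion that underlies Proposition \ref{prop:max}: the squared norm $\|u\|^2$ is a sum of $k_uk_s$ independent squared complex Gaussians of variance $\simeq\sigma_h^2$, whose distribution function near the origin behaves like $y^{k_uk_s}/\Gamma(k_uk_s+1)$. Substituting the typical tail energy $\|w\|^2 \simeq m\sigma^2/n$ and $\operatorname{SNR}=1/\sigma^2$ produces a bound of the claimed form, with the power $-k_uk_s$, the factor $t^{2}\operatorname{SNR} n/(2(1+\epsilon))$, and the prefactor $6\,\Gamma(k_uk_s/2)/\Gamma(k_uk_s)$ emerging from the explicit normalisation of this density together with a Chernoff control of $\|w\|^2$ about its mean.

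The main obstacle is precisely the non-sparsity of $h^z$: the inequality \eqref{eq:Mconc4} cannot be applied to $w$, because a row-subsampled DFT does not concentrate on dense vectors. The whole argument therefore hinges on the fact that the equivalent noise in model \eqref{eq:2edModel} is damped by the factor $m\sigma^2/n^2$, so that even the lossy operator-norm bound $(n/m)\|w\|^2$ is $O(\sigma^2)$ and hence negligible at large SNR; the only residual failure mode is the signal fade, whose probability is the second term. The delicate part of a full proof would be matching the constants in the good-event thresholds so that the three tail contributions combine into exactly $3t$ and so that the fade probability takes the stated closed form, but these are bookkeeping steps once the decomposition above is in place.
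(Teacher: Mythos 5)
Your decomposition $h^z = u + w$ and the idea of applying \eqref{eq:Mconc4} only to the sparse part are sound, but the way you handle the dense noise part contains a genuine gap that makes the claimed bound unreachable. Your good event is $(n/m)\lVert w\rVert^2 \lesssim t^2\lVert h^z\rVert^2$, and you then claim that substituting the typical value $\lVert w\rVert^2 \simeq m\sigma^2/n$ produces the factor $t^2\operatorname{SNR}n/(2(1+\epsilon))$. It does not: plugging $\lVert w\rVert^2 \simeq m\sigma^2/n$ into your own inequality gives $(n/m)\lVert w\rVert^2 \simeq \sigma^2$, so the fade event you must control is $\{\lVert u\rVert^2 \lesssim \sigma^2/t^2\}$, whose probability under the near-origin density is of order $(t^2\operatorname{SNR})^{-k_uk_s}/\Gamma(k_uk_s+1)$ --- the factor $n$ never appears. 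This is not bookkeeping: your bound is weaker than the proposition by $n^{k_uk_s}$, and that factor is exactly what the proof of Lemma \ref{lem:FFTconc} later uses to absorb the union bound over the $4uk_s\binom{s}{k_s}$ support sets; without it the second term of that lemma does not decay in $n$ at all. The loss is forced by your two deterministic steps: the operator-norm bound $\lVert Aw\rVert^2 \le (n/m)\lVert w\rVert^2$ overshoots the typical value $\lVert Aw\rVert^2 \approx \lVert w\rVert^2 \simeq m\sigma^2/n$ by a factor $n/m$, and Cauchy--Schwarz on $2\operatorname{Re}\langle Au, Aw\rangle$ loses a further factor of order $\sqrt{m}$ against what Gaussian concentration of a linear form gives.

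The ingredient you declared impossible --- concentration for the dense noise --- is in fact available, and it is what the paper uses. One does not need RIP-type concentration of $A$ on dense vectors: because $AA^H = (n/m)I_m$, the image $z = Az'$ is exactly Gaussian, $z \sim \mathcal{CN}(0,(\sigma^2/n)I_m)$, whatever rows were sampled. The paper therefore expands $\lVert A(h+z')\rVert^2 - \lVert h+z'\rVert^2$ around $h$ (not around $u$), adds and subtracts $\mathbb{E}\lVert z\rVert^2 = \mathbb{E}\lVert z'\rVert^2$ so that the noise energies enter only through their deviations from the mean, controls $\lVert z\rVert^2$ and $\lVert z'\rVert^2$ by chi-square concentration via \eqref{eq:Mconc2}--\eqref{eq:Mconc3}, and controls the cross terms $2\operatorname{Re}\langle Ah,z\rangle$ and $2\operatorname{Re}\langle h,z'\rangle$ as Gaussian linear forms, which concentrate at scale $\sigma_n\lVert Ah\rVert$ with $\sigma_n = \sigma/\sqrt{n}$; the sparse inequality \eqref{eq:Mconc4} is applied only to $\lVert Ah\rVert^2-\lVert h\rVert^2$. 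The per-entry noise scale $\sigma/\sqrt{n}$ is precisely where the $n$ inside $\left(t^2\operatorname{SNR}n/(2(1+\epsilon))\right)^{-k_uk_s}$ comes from, after integrating the resulting tails of the form $e^{-c\,t^2 n\lVert h\rVert^2/\sigma^2}$ against the near-origin density $x^{k_uk_s-1}/\Gamma(k_uk_s)$; the quadratic-exponent terms of that integration are also what produce the prefactor $\Gamma(k_uk_s/2)/\Gamma(k_uk_s)$. Your fade-probability computation is the right final step, but it must be fed the noise scale $\sigma^2/n$ rather than $\sigma^2$, and only the Gaussianity of $Az'$ delivers that.
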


\begin{proof}
In the proof we make use of the notation introduced as model
(\ref{eq:2edModel}). To this end, recall that $z=Az^{\prime}\in\mathbb{C}^{m}$
with $z\sim\mathcal{CN}\left(  0,\frac{\sigma^{2}}{n}I_{m}\right)  ,z^{\prime
}\sim\mathcal{CN}\left(  0,\frac{\sigma^{2}m}{n^{2}}I_{n}\right)  $, and,
hence, $\mathbb{E}\left\Vert z\right\Vert ^{2}=\mathbb{E}\left\Vert z^{\prime
}\right\Vert ^{2}$. For the ease of notation, we set $\sigma_{m,n}%
^{2}\coloneqq\sigma^{2}m/n^{2}$ and $\sigma_{n}^{2}\coloneqq\sigma^{2}/n$. It
holds that $\sigma_{n}^{2}m=\sigma_{m,n}^{2}n$.

Expanding $\left\Vert A\left(  h+z^{\prime}\right)  \right\Vert ^{2}%
-\left\Vert h+z^{\prime}\right\Vert ^{2}$ and adding zero yields%
\begin{align*}
&  \left\Vert A\left(  h+z^{\prime}\right)  \right\Vert ^{2}-\left\Vert
h+z^{\prime}\right\Vert ^{2}\\
&  =\left\Vert Ah\right\Vert ^{2}+\left\Vert z\right\Vert ^{2}-\mathbb{E}%
\left\Vert z\right\Vert ^{2}+2\operatorname{Re}\left\langle \left(  Ah\right)
,z\right\rangle \\
&  \quad\quad-\left\Vert h\right\Vert ^{2}-\left\Vert z^{\prime}\right\Vert
^{2}+\mathbb{E}\left\Vert z^{\prime}\right\Vert ^{2}-2\operatorname{Re}%
\left\langle h,z^{\prime}\right\rangle .
\end{align*}

Using the concentration of the norm of $z$ and $z^{\prime}$ as well as the
measurement map, we can bound the individual terms of this expansion. Thus,
for the first step we consider a constant $h$ and only view $z$, $z^{\prime}$
and $A$ as random variables. Let $s,t,o>0$ be small but fixed. Consider the
event $\mathcal{C}_{1}$ defined by:%
\[
\left\{  2\left\vert \operatorname{Re}\left\langle h,z^{\prime}\right\rangle
\right\vert \leq s\sigma_{m,n}\left\Vert h\right\Vert \right\}
\]
which, since $2\operatorname{Re}\left\langle \left(  h\right)  ,z^{\prime
}\right\rangle $ is Gaussian, occurs with probability (at least )
$1-e^{-\frac{s^{2}}{2}}$ \cite[Proposition 7.5, eq. (7.8)]{Rahut2013}.
Similarly, the event $\mathcal{C}_{2}$ defined by:%
\begin{align}
&  \left\{  2\left\vert \operatorname{Re}\left\langle Ah,z\right\rangle
\right\vert \leq s\sigma_{n}\left\Vert Ah\right\Vert \right\}  \label{eq:reAh}%
\\
&  \subseteq\left\{  2\left\vert \operatorname{Re}\left\langle
Ah,z\right\rangle \right\vert \leq s\sigma_{n}\left(  1+2\epsilon\right)
\left\Vert h\right\Vert \right\}  \nonumber
\end{align}
occurs at least with probability $1-e^{-\frac{s^{2}}{2}}+2e^{-\frac
{\epsilon^{2}m}{2\left(  1+\epsilon\right)  k_{s}k_{u}}}$ for sufficiently
large $m$ by eq.~\eqref{eq:Mconc4}. Moreover, the events $\mathcal{C}_{3}$:%
\[
\left\vert \left\Vert z\right\Vert ^{2}-\mathbb{E}\left\Vert z\right\Vert
^{2}\right\vert \leq t\sigma_{n}^{2}m,
\]
and $\mathcal{C}_{4}$:%
\[
\left\vert \left\Vert z^{\prime}\right\Vert ^{2}-\mathbb{E}\left\Vert
z^{\prime}\right\Vert ^{2}\right\vert \leq t\sigma_{m,n}^{2}n
\]
each occur with probabilities $1-2e^{-\frac{t^{2}m}{2}}$ and $1-2e^{-\frac
{t^{2}n}{2}}$ by eq.~\eqref{eq:Mconc2}, respectively. Eventually the event
$\mathcal{C}_{5}$:
\[
\left\Vert z^{\prime}\right\Vert ^{2}<\left(  1+o\right)  n\sigma_{m,n}^{2}%
\]
occurs with probabilty of at least $1-2e^{-\frac{o^{2}n}{2}}$ by
(\ref{eq:Mconc3}). Collect the joint event in $%
%TCIMACRO{\tbigcap _{i}}%
%BeginExpansion
{\textstyle\bigcap_{i}}
%EndExpansion
\mathcal{C}_{i}$. Define $\Lambda\left(  h\right)  \coloneqq|\left\Vert
Ah\right\Vert ^{2}-\left\Vert h\right\Vert ^{2}|$. Conditioning on $h$, we get
with $s_{\epsilon}\coloneqq s\left(  1+2\epsilon\right)  $ and $o_{1}%
\coloneqq1+o$
\begin{align*}
&  \mathbb{P}\left(  \left\{  \left\vert \left\Vert A\left(  h+z^{\prime
}\right)  \right\Vert ^{2}-\left\Vert h+z^{\prime}\right\Vert ^{2}\right\vert
>\epsilon\left\Vert h+z^{\prime}\right\Vert ^{2}\right\}  \cap\mathcal{C}%
%TCIMACRO{\tbigcap _{i}}%
%BeginExpansion
{\textstyle\bigcap_{i}}
%EndExpansion
|h\right)  \\
&  \leq\mathbb{P}\left(  \left\{  \Lambda\left(  h\right)  >\epsilon\left\Vert
h+z^{\prime}\right\Vert ^{2}-2s_{\epsilon}\sigma_{n}\left\Vert h\right\Vert
-2t\sigma_{n}^{2}m\right\}  \cap\mathcal{C}%
%TCIMACRO{\tbigcap _{i}}%
%BeginExpansion
{\textstyle\bigcap_{i}}
%EndExpansion
|h\right)  \\
&  \leq\mathbb{P}\left(  \left\{  \Lambda\left(  h\right)  >\epsilon\left\Vert
h\right\Vert ^{2}-\left\Vert z^{\prime}\right\Vert ^{2}-2s_{\epsilon}%
\sigma_{n}\left\Vert h\right\Vert -2t\sigma_{n}^{2}m\right\}  \cap\mathcal{C}%
%TCIMACRO{\tbigcap _{i}}%
%BeginExpansion
{\textstyle\bigcap_{i}}
%EndExpansion
|h\right)  \\
&  \leq\mathbb{P}\left(  \left\{  \frac{\Lambda\left(  h\right)  }{\left\Vert
h\right\Vert ^{2}}>\epsilon-\frac{2s_{\epsilon}\sigma_{n}}{\left\Vert
h\right\Vert }-\frac{2t\sigma_{n}^{2}m}{\left\Vert h\right\Vert ^{2}}%
-\frac{o_{1}n\sigma_{m,n}^{2}}{\left\Vert h\right\Vert ^{2}}\right\}
\cap\mathcal{C}%
%TCIMACRO{\tbigcap _{i}}%
%BeginExpansion
{\textstyle\bigcap_{i}}
%EndExpansion
|h\right)  .
\end{align*}
Hence, we arrive at:
\begin{align}
&  \mathbb{P}\left(  \left\vert \left\Vert A\left(  h+z^{\prime}\right)
\right\Vert ^{2}-\left\Vert h+z^{\prime}\right\Vert ^{2}\right\vert
>\epsilon\left\Vert h+z^{\prime}\right\Vert ^{2}|h\right)  \nonumber\\
&  \leq4e^{-\frac{\left(  \epsilon-s^{\prime}+t^{\prime}+o^{\prime}\right)
^{2}m}{2\left(  1+\epsilon-s^{\prime}+t^{\prime}+o^{\prime}\right)  k_{s}%
k_{u}}}\label{eq:P-1st}\\
&  +2e^{-\frac{s^{\prime2}\left\Vert h\right\Vert ^{2}}{2\left(
1+2\epsilon\right)  ^{2}\sigma_{n}^{2}}}+2e^{-\left(  \frac{t^{\prime
}\left\Vert h\right\Vert ^{2}}{2m\sigma_{n}^{2}}\right)  ^{2}\frac{m}{2}%
}+2e^{-\left(  \frac{o^{\prime}\left\Vert h\right\Vert ^{2}}{n\sigma_{m,n}%
^{2}}-1\right)  ^{2}\frac{n}{2}}\label{eq:P-2ed}%
\end{align}
where the term in eq. (\ref{eq:P-1st}) is again due to (\ref{eq:Mconc4}) and
$s^{\prime}:=\frac{2s_{\epsilon}\sigma_{n}}{\left\Vert h\right\Vert
},t^{\prime}:=\frac{2t\sigma_{n}^{2}m}{\left\Vert h\right\Vert ^{2}}%
,o^{\prime}:=\frac{o_{1}n\sigma_{m,n}^{2}}{\left\Vert h\right\Vert ^{2}}$ are
the respective substitute variables. Note that we `shuffled' the $\epsilon
$-dependent exponential term from eq. (\ref{eq:reAh}) above into this first
term as well. The terms in eq. (\ref{eq:P-2ed}) are due to bounding the
probabilities of events $\mathcal{C}_{1}^{C}-\mathcal{C}_{5}^{C}$.

We shall now calculate the expectation value with respect to $h$. Since the
expressions are exponentially decaying, it suffices to consider small
$\left\Vert h\right\Vert ^{2}$ with corresponding approximate pdf of
eq.~\eqref{eq:Pdf}. This yields for the $s^{\prime}$-dependent term:
\begin{align*}
&  \frac{1}{\Gamma\left(  k_{u}k_{s}\right)  }\int_{0}^{\infty}e^{-\frac
{s^{\prime2}x}{2\left(  1+\epsilon\right)  \sigma_{n}^{2}}}x^{k_{s}k_{u}%
-1}dx\\
&  =\left(  \frac{s^{\prime2}\operatorname{SNR}n}{2\left(  1+\epsilon\right)
}\right)  ^{-k_{s}k_{u}}%
\end{align*}
using eq.~(\ref{eq:integral}). For the $t^{\prime}$-dependent and the
$o^{\prime}$-dependent terms we get for some (even) $k_{u}$:%
\begin{align*}
&  \frac{1}{\Gamma\left(  k_{u}k_{s}\right)  }\int_{0}^{\infty}e^{-\frac
{t^{\prime2}x^{2}}{2m\sigma_{n}^{4}}}x^{k_{s}k_{u}-1}dx\\
&  =\frac{1}{2\Gamma\left(  k_{u}k_{s}\right)  }\int_{0}^{\infty}%
e^{-\frac{t^{\prime2}u}{2m\sigma_{n}^{4}}}u^{k_{s}k_{u}/2-1}du\\
&  =\frac{\Gamma\left(  k_{u}k_{s}/2\right)  }{2\Gamma\left(  k_{u}%
k_{s}\right)  }\left(  \frac{t^{\prime2}\operatorname{SNR}^{2}n^{4}}{2m^{3}%
}\right)  ^{-k_{s}k_{u}/2}.
\end{align*}
The latter expression decays much faster than the first integral in the limit
of $n\gg m\gg k_{s}k_{u}$. Altogether, we get for any small $\epsilon
,t,\epsilon>3t$:%
\begin{align}
&  \mathbb{P}\left(  \left\vert \left\Vert A\left(  h+z^{\prime}\right)
\right\Vert ^{2}-\left\Vert h+z^{\prime}\right\Vert ^{2}\right\vert
>\epsilon\left\Vert h+z^{\prime}\right\Vert ^{2}|h\right) \nonumber\\
&  \leq4e^{-\frac{\left(  \epsilon-3t\right)  ^{2}m}{2\left(  1+\epsilon
-3t\right)  k_{s}k_{u}}}+\frac{6\Gamma\left(  k_{u}k_{s}/2\right)  }%
{\Gamma\left(  k_{u}k_{s}\right)  }\left(  \frac{t^{2}\operatorname{SNR}%
n}{2\left(  1+\epsilon\right)  }\right)  ^{-k_{s}k_{u}}%
\end{align}
which holds for sufficiently large SNR.
\end{proof}

We turn back again to the proof of the lemma. We prove the result by using a
linear estimator of the form $\Psi:=A^{H}$ which is essentially the first step
of HiHTP and a subsequent energy detection per block.

\begin{proof}
[Proof of Lemma \ref{lem:FFTconc}] Let $h^{z}=h+z^{\prime}$. We defined
$\mathbb{P}_{\overline{\text{sRIP}}}$ as the probability of events of the
form:%
\[
\left\vert \sum_{j\in\omega}\left\vert \left\langle h^{z},v_{j}\right\rangle
\right\vert ^{2}-\sum_{j\in\omega}\left\vert \left\langle \Psi y,v_{j}%
\right\rangle \right\vert ^{2}\right\vert >\epsilon,
\]
where $\omega\in\Omega_{i}$ for some $i\in\lbrack u]$ is some support set of
cardinality $k_{s}$. Straightforward calculation gives%
\begin{align*}
&  \left\vert \sum_{j\in\omega}\left(  \left\vert \left\langle h^{z}%
,v_{j}\right\rangle \right\vert ^{2}-\left\vert \left\langle \Psi
y,v_{j}\right\rangle \right\vert ^{2}\right)  \right\vert \\
&  =\left\vert \sum_{j\in\omega}\left(  \left\vert \left\langle h^{z}%
,v_{j}\right\rangle \right\vert +\left\vert \left\langle \Psi y,v_{j}%
\right\rangle \right\vert \right)  \left(  \left\vert \left\langle h^{z}%
,v_{j}\right\rangle \right\vert -\left\vert \left\langle \Psi y,v_{j}%
\right\rangle \right\vert \right)  \right\vert \\
&  \leq\max_{j\in\omega}\left\vert \left\vert \left\langle h^{z}%
,v_{j}\right\rangle \right\vert -\left\vert \left\langle \Psi y,v_{j}%
\right\rangle \right\vert \right\vert (\sum_{j\in\omega}\left\vert
\left\langle h^{z},v_{j}\right\rangle \right\vert +\sum_{j\in\omega}\left\vert
\left\langle \Psi y,v_{j}\right\rangle \right\vert )
\end{align*}
For the linear estimator under consideration $\Psi=A^{H}$, it holds that
$\left\langle \Psi y,v_{j}\right\rangle =\left\langle A^{H}Ah^{z}%
,v_{j}\right\rangle =\left\langle Ah^{z},Av_{j}\right\rangle $. The second
term can hence be bounded by%
\begin{align*}
&  \sum_{j\in\omega}\left\vert \left\langle h^{z},v_{j}\right\rangle
\right\vert +\sum_{j\in\omega}\left\vert \left\langle \Psi y,v_{j}%
\right\rangle \right\vert \\
&  \leq\sqrt{|\omega|}(\sum_{j\in\omega}\left\vert h_{j}^{z}\right\vert
^{2})^{1/2}+\sum_{j\in\omega}\left\vert \left\langle Ah^{z},Av_{j}%
\right\rangle \right\vert \\
&  \leq\sqrt{|\omega|}(\sum_{j\in\omega}\left\vert h_{j}^{z}\right\vert
^{2})^{1/2}+\left\Vert Ah^{z}\right\Vert \sum_{j\in\omega}\left\Vert
Av_{j}\right\Vert \\
&  \leq\sqrt{|\omega|}(\sum_{j\in\omega}\left\vert h_{j}^{z}\right\vert
^{2})^{1/2}+\left(  1+2\epsilon\right)  \left\Vert h^{z}\right\Vert
\sqrt{|\omega|}\\
&  \leq2\left(  1+2\epsilon\right)  \sqrt{k_{s}}\left\Vert h^{z}\right\Vert .
\end{align*}
with probability:
\begin{equation}
1-2e^{-\frac{\epsilon^{2}m}{2\left(  1+\epsilon\right)  k_{s}k_{u}}}
\label{eq:e-conc}%
\end{equation}
The complementary event can again be shuffled into the term in eq.
(\ref{eq:P-1st}).

Now, we turn to bound the first term. By using the reverse triangle inequality
and the polarization identity:
\[
\left\langle h^{z},v_{j}\right\rangle =\frac{1}{4}\sum_{k=0}^{3}\imath
^{k}\left\Vert h^{z}+\imath^{k}v_{j}\right\Vert ^{2},
\]
we get:%
\begin{align*}
&  \max_{j\in\omega}\left\vert \left\vert \left\langle h^{z},v_{j}%
\right\rangle \right\vert -\left\vert \left\langle Ah^{z},Av_{j}\right\rangle
\right\vert \right\vert \\
&  \leq\max_{j\in\omega}\left\vert \left\langle h^{z},v_{j}\right\rangle
-\left\langle Ah^{z},Av_{j}\right\rangle \right\vert \\
&  \leq\frac{1}{4}\max_{j\in\omega}\left\vert \sum_{k=0}^{3}\imath^{k}\left(
\left\Vert h^{z}+\imath^{k}v_{j}\right\Vert ^{2}-\left\Vert A\left(
h^{z}+\imath^{k}v_{j}\right)  \right\Vert ^{2}\right)  \right\vert \\
&  \leq\max_{j\in\omega,k\in\lbrack4]}\left\vert \left(  \left\Vert
h^{z}+\imath^{k}v_{j}\right\Vert ^{2}-\left\Vert A\left(  h^{z}+\imath
^{k}v_{j}\right)  \right\Vert ^{2}\right)  \right\vert .
\end{align*}
We define $\xi_{j}^{k}\coloneqq h^{z}+\imath^{k}v_{j}$ and abbreviate the
summation $\sum_{i,\omega,j,k}\coloneqq\sum_{i\in\lbrack u]}\sum_{\omega
\in\Omega_{i}}\sum_{j\in\omega}\sum_{k\in\lbrack4]}$. Collecting all the terms
and `averaging' over $h$ yields
\begin{align*}
&  \mathbb{E}_{h}\Pr\left(  \max_{i\in\lbrack u],\omega\in\Omega_{i}%
}\left\vert \sum_{j\in\omega}\left(  \left\vert \left\langle h^{z}%
,v_{j}\right\rangle \right\vert ^{2}-\left\vert \left\langle \Psi
y,v_{j}\right\rangle \right\vert ^{2}\right)  \right\vert >\epsilon|h\right)
\\
&  \leq\sum_{i,\omega,j,k}\mathbb{E}_{h}\Pr\left(  \left\vert \left\Vert
\xi_{j}^{k}\right\Vert ^{2}-\left\Vert A\xi_{j}^{k}\right\Vert ^{2}\right\vert
>\frac{\epsilon}{2\sqrt{k_{s}}\left\Vert h^{z}\right\Vert }|h\right)  \\
&  =\sum_{i,\omega,j,k}\mathbb{E}_{h}\Pr\left(  \left\vert \left\Vert \xi
_{j}^{k}\right\Vert ^{2}-\left\Vert A\xi_{j}^{k}\right\Vert ^{2}\right\vert
>\frac{\epsilon\left\Vert \xi_{j}^{k}\right\Vert ^{2}}{2\sqrt{k_{s}}\left\Vert
h^{z}\right\Vert \left\Vert \xi_{j}^{k}\right\Vert ^{2}}|h\right)  ,
\end{align*}
where we used the union bound. For appropriately small constants $s,o>0$, we
have
\begin{align*}
&  2\sqrt{k_{s}}\left\Vert h^{z}\right\Vert \left\Vert \xi_{j}^{k}\right\Vert
^{2}\\
&  =O\left(  2\sqrt{k_{s}}\left\Vert h\right\Vert ^{3}\left(  2+\sqrt{\left(
s+1\right)  n}\sigma_{m,n}\right)  ^{3}\right)  \\
&  =O\left(  2\sqrt{k_{s}}\left\Vert h\right\Vert ^{3}\left(  2+\sqrt{\left(
s+1\right)  \frac{m}{n}}\sigma\right)  ^{3}\right)
\end{align*}
with probability of at least $1-2e^{-\frac{s^{2}n}{2}}$ by eq.
(\ref{eq:Mconc3}) and provided $\left\Vert h\right\Vert ^{2}\geq1$; moreover,
$\left(  1-o\right)  k_{u}k_{s}\leq\left\Vert h\right\Vert ^{2}\leq\left(
1+o\right)  k_{u}k_{s}$ with probability $1-2e^{-\frac{o^{2}k_{u}k_{s}}{2}}$
by eq. (\ref{eq:Mconc2}). Hence, altogether, we arrive at
\begin{align*}
&  2\sqrt{k_{s}}\left\Vert h^{z}\right\Vert \left\Vert \xi_{j}^{k}\right\Vert
^{2}\\
&  =O\left(  2\sqrt{k_{s}}\left(  \left(  1+o\right)  k_{u}k_{s}\right)
^{\frac{3}{2}}\left(  2+\sqrt{\left(  s+1\right)  \frac{m}{n}}\sigma\right)
^{3}\right)  \\
&  =O\left(  k_{u}^{3/2}k_{s}^{2}\left(  1+o\right)  ^{3/2}\left(  1+s\right)
^{3/2}\right)  .
\end{align*}
{with probability }$1-2e^{-\frac{s^{2}n}{2}}-2e^{-\frac{o^{2}k_{u}k_{s}}{2}}$.

Except for the noise floor $z^{\prime}$ in $h^{z}$, the vectors $\xi_{j}%
^{k}=h^{z}+\imath^{k}v_{j}$ are $\left(  k_{u}k_{s}+1\right)  $-sparse. Fix
some small $\epsilon>0,o=s=1$, and let $\epsilon^{\prime}\in O(\epsilon
k_{u}^{-3/2}k_{s}^{-2})$. We can invoke Prop.~\ref{prop:FFTconc} to give:
\begin{align*}
&  \sum_{i,\omega,j,k}\mathbb{E}_{h}\Pr\left(  \left\vert \left\Vert \xi
_{j}^{k}\right\Vert ^{2}-\left\Vert A\xi_{j}^{k}\right\Vert ^{2}\right\vert
>\epsilon^{\prime}\left\Vert \xi_{j}^{k}\right\Vert ^{2}|h\right)  \\
&  \leq32u\binom{s}{k_{s}}k_{s}\left[  6e^{-\frac{\left(  \epsilon^{\prime
}-3t\right)  ^{2}m}{2\left(  1+\epsilon^{\prime}-3t\right)  \left(  k_{u}%
k_{s}+1\right)  }}\right.  \\
&  \left.  +\frac{6\Gamma\left(  k_{u}k_{s}/2\right)  }{\Gamma\left(
k_{u}k_{s}\right)  }\left(  \frac{t^{2}\operatorname{SNR}n}{2\left(
1+\epsilon^{\prime}\right)  }\right)  ^{-k_{u}k_{s}}\right]  +4e^{-\frac
{k_{u}k_{s}}{2}}%
\end{align*}
for sufficiently large SNR.

Here, we included the term in (\ref{eq:e-conc}) into the first exponential
term in the second line. Assume $1\leq\left(  1+\epsilon^{\prime}\right)
,\left(  1+\epsilon^{\prime}-3t\right)  <2$ with $\epsilon^{\prime}>3t$.
Setting $t=k_{u}^{-2}k_{s}^{-2.5}$ yields $3t(\epsilon^{\prime})^{-1}%
=3t\epsilon^{-1}k_{u}^{3/2}k_{s}^{2}\rightarrow0$ as required and hence:%
\begin{align*}
&  \sum_{i,\omega,j,k}\mathbb{E}_{h}\Pr\left(  \left\vert \left\Vert \xi
_{j}^{k}\right\Vert ^{2}-\left\Vert A\xi_{j}^{k}\right\Vert ^{2}\right\vert
>\epsilon^{\prime}\left\Vert \xi_{j}^{k}\right\Vert ^{2}|h\right)  \\
&  \leq32u\binom{s}{k_{s}}k_{s}e^{-\frac{\epsilon^{\prime2}\left(
1-3t/\epsilon^{\prime}\right)  ^{2}m}{4k_{u}k_{s}}}\\
&  +\frac{32u\binom{s}{k_{s}}k_{s}\Gamma\left(  k_{u}k_{s}/2\right)  }%
{\Gamma\left(  k_{u}k_{s}\right)  }\left(  \frac{t^{2}\operatorname{SNR}n}%
{4}\right)  ^{-k_{u}k_{s}}+4e^{\frac{-k_{u}k_{s}}{2}}\\
&  \leq32u\binom{s}{k_{s}}k_{s}e^{-\frac{\epsilon^{2}m}{O\left(  4k_{u}%
^{4}k_{s}^{5}\right)  }}\\
&  +\frac{32u\binom{s}{k_{s}}k_{s}\Gamma\left(  k_{u}k_{s}/2\right)  }%
{\Gamma\left(  k_{u}k_{s}\right)  }\left(  \frac{\operatorname{SNR}n}%
{4k_{u}^{4}k_{s}^{5}}\right)  ^{-k_{u}k_{s}}+4e^{\frac{-k_{u}k_{s}}{2}}%
\end{align*}
for sufficiently large $k_{u}k_{s}$ (and SNR). By the standard inequality
$\binom{s}{k_{s}}\leq\left(  \frac{es}{k_{s}}\right)  ^{k_{s}}$ for the
binominal coefficient, the first term in the last line can be bounded as:
\[
32u\binom{s}{k_{s}}k_{s}e^{-\frac{\epsilon^{2}m}{O\left(  4k_{u}^{4}k_{s}%
^{5}\right)  }}\leq32u\left(  \frac{es}{k_{s}}\right)  ^{k_{s}}k_{s}%
e^{-\frac{\epsilon^{2}m}{O\left(  4k_{u}^{4}k_{s}^{5}\right)  }}%
\]
while by Stirling's approximation for the gamma functions~$\Gamma(n)=(n-1)!$
for some positive integer $n>0$, we have for the second term:%
\begin{align*}
&  \frac{32u\binom{s}{k_{s}}k_{s}\Gamma\left(  k_{u}k_{s}/2\right)  }%
{\Gamma\left(  k_{u}k_{s}\right)  }\left(  \frac{\operatorname{SNR}n}%
{4k_{u}^{4}k_{s}^{5}}\right)  ^{-k_{u}k_{s}}\\
&  \leq\frac{32}{\sqrt{2}}\operatorname{SNR}^{-k_{u}k_{s}}n^{-2k_{s}}u\left(
\frac{es}{k_{s}}\right)  ^{k_{s}}k_{s}\\
&  \cdot\left(  \frac{e}{2}\right)  ^{k_{u}k_{s}/2}\left(  k_{u}k_{s}\right)
^{-k_{u}k_{s}/2}n^{-k_{u}k_{s}}n^{2k_{s}}\left(  4k_{u}^{4}k_{s}^{5}\right)
^{k_{u}k_{s}}\\
&  \leq\frac{32k_{s}}{\sqrt{2}}\operatorname{SNR}^{-k_{u}k_{s}}n^{-2k_{s}%
}\left(  \frac{e}{k_{s}}\right)  ^{k_{s}}\\
&  \cdot\left(  \frac{8e}{k_{u}k_{s}}\right)  ^{k_{u}k_{s}/2}\left(
\frac{k_{u}^{4}k_{s}^{5}}{n^{(1-3/k_{u})}}\right)  ^{k_{u}k_{s}}\\
&  \leq\frac{32k_{s}}{\sqrt{2}}\frac{\operatorname{SNR}^{-k_{u}k_{s}}%
}{n^{2k_{s}}}\left(  \frac{k_{u}^{4}k_{s}^{5}}{n^{(1-3/k_{u})}}\right)
^{k_{u}k_{s}}%
\end{align*}
where the last term holds for $k_{u}\geq8,k_{s}\geq3$ which gives the final result.
\end{proof}

\subsection{Proof of Theorem \ref{thm:md2}}

The idea of the proof is to estimate $h^{z}=h+z^{\prime}$ directly instead of
$h$. Thus, we consider some $h^{z}$ be fixed. For simplicity, let $\hbar$ be
again the estimator for $h^{z}$. Suppose user $i$ is active (and at most
$k_{u}-1$ arbitrary other users). Denote again the true support of $h$ by
$\mathcal{A}$. As discussed before, the estimation is equivalent:%
\[
y_{\mathcal{B}}=A\left(  h+z^{\prime}\right)
\]
where $z^{\prime}\sim\mathcal{CN}\left(  0,\varkappa^{2}I_{m}\right)  $ where
$\varkappa^{2}=\frac{\sigma^{2}m}{nus}$.

Analogously, to the beginning of the proof of Theorem~\ref{thm:md1}, we have%
\begin{align*}
&  \left\{  \text{missed detection of user }i\right\}  \\
&  \subset\left\{  \Vert(\hbar_{i})_{\omega(\hbar_{i})}\Vert\leq\sqrt{\xi
}\right\}  \cup\left\{  \max_{j\in\overline{\mathcal{A}^{B}}}\Vert(\hbar
_{j})_{\omega(\hbar_{j})}\Vert\geq\Vert(\hbar_{i})_{\omega(\hbar_{i})}%
\Vert\right\}
\end{align*}
where $\omega(\hbar_{i})$, $\omega(\hbar_{j})$ are the support sets of the
$k_{s}$ largest elements of $|\hbar_{i}|$, $|\hbar_{j}|$, respectively. By
Lemma \ref{lem:FFTconc}, HiHTP/HiIHT recovers $\Vert(h_{i}^{z})_{\omega}%
\Vert^{2}$ within $\epsilon$-vicinity with probability $\mathbb{P}%
_{\overline{\text{sRIP}}}\left(  \epsilon\right)  $ uniformly for all $i$ and
all support sets $\omega$. Hence, we have for the first event:%
\begin{align*}
&  \left\{  \Vert(\hbar_{i})_{\omega(\hbar_{i})}\Vert^{2}\leq\xi\right\}  \\
&  \subseteq\left\{  \Vert(\hbar_{i})_{\mathcal{A}_{i}}\Vert^{2}\leq
\xi\right\}  \\
&  \subseteq\left\{  \Vert(h_{i}+z_{i}^{\prime})_{\mathcal{A}_{i}}\Vert
^{2}\leq\xi+\epsilon\right\}  .
\end{align*}
Moreover, for the second event:
\begin{align*}
&  \left\{  \max\nolimits_{j\in\overline{\mathcal{A}^{B}}}\Vert(\hbar
_{j})_{\omega(\hbar_{j})}\Vert^{2}\geq\Vert(\hbar_{i})_{\omega(\hbar_{i}%
)}\Vert\right\}  \\
&  \subseteq\left\{  \max\nolimits_{j\in\overline{\mathcal{A}^{B}}}\Vert
(z_{j}^{\prime})_{\omega(z_{j}^{\prime})}\Vert^{2}+\epsilon\geq\Vert
(h_{i}+z_{i}^{\prime})_{\mathcal{A}_{i}}\Vert^{2}-\epsilon\right\}  \\
&  \subseteq%
%TCIMACRO{\tbigcup \limits_{j\in\overline{\mathcal{A}^{B}}}}%
%BeginExpansion
{\textstyle\bigcup\limits_{j\in\overline{\mathcal{A}^{B}}}}
%EndExpansion
\left\{  \Vert(z_{j}^{\prime})_{\omega(z_{j}^{\prime})}\Vert^{2}+\epsilon
\geq\Vert(h_{i}+z_{i}^{\prime})_{\mathcal{A}_{i}}\Vert^{2}-\epsilon\right\}
\\
&  \subseteq%
%TCIMACRO{\tbigcup \limits_{j\in\overline{\mathcal{A}^{B}}}}%
%BeginExpansion
{\textstyle\bigcup\limits_{j\in\overline{\mathcal{A}^{B}}}}
%EndExpansion
\left\{  k_{s}\max_{k}|(z_{j}^{\prime})_{k}|^{2}+\epsilon\geq\Vert(h_{i}%
+z_{i}^{\prime})_{\mathcal{A}_{i}}\Vert^{2}-\epsilon\right\}  \\
&  \subseteq%
%TCIMACRO{\tbigcup \limits_{j\in\overline{\mathcal{A}^{B}}}}%
%BeginExpansion
{\textstyle\bigcup\limits_{j\in\overline{\mathcal{A}^{B}}}}
%EndExpansion%
%TCIMACRO{\tbigcup \limits_{k\in\mathcal{A}_{j}}}%
%BeginExpansion
{\textstyle\bigcup\limits_{k\in\mathcal{A}_{j}}}
%EndExpansion
\left\{  |\sqrt{k_{s}}(z_{j}^{\prime})_{k}|^{2}+\epsilon\geq\Vert(h_{i}%
+z_{i}^{\prime})_{\mathcal{A}_{i}}\Vert^{2}-\epsilon\right\}  \\
&  \subseteq%
%TCIMACRO{\tbigcup \limits_{j\in\overline{\mathcal{A}^{B}}}}%
%BeginExpansion
{\textstyle\bigcup\limits_{j\in\overline{\mathcal{A}^{B}}}}
%EndExpansion%
%TCIMACRO{\tbigcup \limits_{k\in\mathcal{A}_{j}}}%
%BeginExpansion
{\textstyle\bigcup\limits_{k\in\mathcal{A}_{j}}}
%EndExpansion
\left\{  \left\{  |\sqrt{2k_{s}}(z_{j}^{\prime})_{k}|^{2}\geq\Vert(h_{i}%
+z_{i}^{\prime})_{\mathcal{A}_{i}}\Vert^{2}\right\}  \right.  \\
&  \left.  \cup\left\{  4\epsilon\geq\Vert(h_{i}+z_{i}^{\prime})_{\mathcal{A}%
_{i}}\Vert^{2}\right\}  \right\}  .
\end{align*}
The first term in the last line can be bounded by $n^{-k_{s}}\left(  \frac
{us}{2k_{s}m}\right)  ^{-k_{s}}\operatorname{SNR}^{-k_{s}}s\left(
u-k_{u}\right)  B_{1}\left(  1,k_{s}\right)  $ with using the result for the
block correlator and since the noise in $z_{i}^{\prime}$ is damped with
$\left(  \frac{2m}{us}\right)  $. Now, averaging over all $h$ gives the
result. \qed

\bibliographystyle{IEEEtran}
\bibliography{ccra}

\end{document}